\newenvironment{proof}{\noindent {\bf Proof:  }}{\hfill\rule{2mm}{2mm}}
\newtheorem{claim}{Claim}[section]
\newtheorem{fact}{Fact}[section]
\newtheorem{theorem}{Theorem}[section]
\newtheorem{lemma}{Lemma}[section]
\newtheorem{corollary}{Corollary}[section]
\g@addto@macro{\@algocf@init}{\SetKwInOut{Parameter}{Parameters}} 
\newcommand*\samethanks[1][\value{footnote}]{\footnotemark[#1]}
\newcommand{\opt}{\mathsf{OPT}}
\newcommand{\alg}{\mathsf{ALG}}
\newcommand{\w}{\mathbf{w}}
\newcommand{\ceil}[1]{\ensuremath{\lceil #1 \rceil}}
\renewcommand{\v}{\mathsf{v}}
\newcommand{\one}{\mathbf{1}}
\newcommand{\vs}{\mathbf{s}}
\newcommand{\va}{\mathbf{a}}
\newcommand{\vx}{\mathbf{x}}
\newcommand{\ovpp}{\textsf{Online Vector Packing Problem}\xspace}
\newcommand{\ogap}{\textsf{Online Generalized Assignment Problem}\xspace}
\renewcommand{\R}{\mathbb{R}}
\title{Online Submodular Maximization Problem with Vector Packing Constraint}
\author{T-H. Hubert Chan\thanks{Department of Computer Science, the University of Hong Kong. {\texttt{\{hubert,sfjiang,zhtang,xwwu\}@cs.hku.hk}}} 
	\and Shaofeng H.-C. Jiang\samethanks
	\and Zhihao Gavin Tang\samethanks
	\and Xiaowei Wu\samethanks}
\date{}
\begin{document}
	
\begin{titlepage}
	\maketitle
	\begin{abstract}
We consider the online vector packing problem in which we have a $d$ dimensional knapsack and items $u$ with weight vectors $\w_u\in\R_+^d$ arrive online in an arbitrary order.
Upon the arrival of an item, the algorithm must decide immediately whether to discard or accept the item into the knapsack.
When item $u$ is accepted, $\w_u(i)$ units of capacity on dimension $i$ will be taken up, for each $i\in[d]$.
To satisfy the knapsack constraint, an accepted item can be later disposed of with no cost, but discarded or disposed of items cannot be recovered.
The objective is to maximize the utility of the accepted items $S$ at the end of the algorithm, which is given by $f(S)$ for some non-negative monotone submodular function $f$.

For any small constant $\epsilon > 0$, we consider the special case that the weight of an item on every dimension is at most a $(1-\epsilon)$ fraction of the total capacity, and give a polynomial-time deterministic $O(\frac{k}{\epsilon^2})$-competitive algorithm for the problem, where $k$ is the (column) sparsity of the weight vectors.  We also show several (almost) tight hardness results even when the algorithm is computationally unbounded.  We first show that under the $\epsilon$-slack assumption, no deterministic algorithm can obtain any $o(k)$ competitive ratio, and no randomized algorithm can obtain any $o(\frac{k}{\log k})$ competitive ratio. We then show that for the general case (when $\epsilon = 0$), no randomized algorithm can obtain any $o(k)$ competitive ratio.

In contrast to the $(1+\delta)$ competitive ratio achieved in Kesselheim et al. (STOC 2014) for the problem with random arrival order of items and under large capacity assumption, we show that in the arbitrary arrival order case, even when $\| \w_u \|_\infty$ is arbitrarily small for all items $u$, it is impossible to achieve any $o(\frac{\log k}{\log\log k})$ competitive ratio.
\end{abstract}
	\thispagestyle{empty}
\end{titlepage}

\section{Introduction}\label{sec:intro}

\paragraph{\ovpp.}
We consider the following online submodular maximization problem with vector packing constraint.
Suppose we have a $d$ dimensional knapsack, and items arrive online in an arbitrary order.
Each item $u\in\Omega$ has a \emph{weight vector} $\w_u \in \R_+^d$, i.e.,
when item $u\in\Omega$  is accepted,
for each $i \in [d]$, item~$u$ will take up $\w_u(i)$ units of capacity on every dimension~$i$ of the knapsack.
By rescaling the weight vectors, we can assume that each of the $d$ dimensions has capacity $1$.
Hence we can assume w.l.o.g. that $\w_u\in[0,1]^d$ for all $u\in\Omega$.
The \emph{(column) sparsity}~\cite{ipco/BansalKNS10,stoc/KesselheimTRV14} is defined as the minimum number~$k$ such that every weight vector $\w_u$ has at most $k$ non-zero coordinates.
The objective is to pack a subset of items with the maximum utility into the knapsack, where the utility of a set~$S$ of items is given by a non-negative monotone submodular function $f:2^\Omega\rightarrow \R_+$.

The vector packing constraint requires that the accepted items can take up a total amount of at most $1$ capacity
on each of the $d$ dimensions of the knapsack.
However, as items come in an arbitrary order, it can be easily shown that the competitive ratio is arbitrarily bad,
if the decision of acceptance of each item is decided online and cannot be revoked later.
In the literature, when the arrival order is arbitrary, the \emph{free disposal} feature~\cite{wine/FeldmanKMMP09}
is considered, namely, an accepted item can be disposed of when later items arrive.
On the other hand, we cannot recover items that are discarded or disposed of earlier.

We can also interpret the problem as solving the following program online,
where variables pertaining to $u$ arrive at step $u \in \Omega$.
We assume that the algorithm does not know the number of items in the sequence.
The variable $x_u\in\{0,1\}$ indicates whether item $u$ is accepted.
During the step $u$, the algorithm decides to set $x_u$ to $0$ or $1$,
and may decrease $x_{u'}$ from $1$ to $0$ for some $u' < u$ in order
to satisfy the vector packing constraints.
\begin{align*}
 \max \qquad f(\{u\in\Omega:x_u=1\}) & \\
 \text{s.t.}\qquad \sum_{u\in\Omega} \w_u(i) \cdot x_u \leq 1&, \qquad \forall i\in [d] \\
 x_u \in\{0,1\}&, \qquad \forall u \in \Omega.
\end{align*}

In some existing works~\cite{mor/BuchbinderN09,esa/FeldmanHKMS10,icalp/MolinaroR12,stoc/KesselheimTRV14},
the items are decided by the adversary, who sets the value (the utility of a set of items is the summation of their values) and the weight vector of each item, but the items arrive in a uniformly random order. This problem is sometimes referred to as \textsf{Online Packing LPs} with random arrival order, and each choice is irrevocable.
To emphasize our setting, we refer to our problem as \ovpp (with submodular objective and free disposal).

\paragraph{Competitive Ratio.}
After all items have arrived, suppose $S \subset\Omega$ is the set of items currently accepted (excluding those that are disposed of) by the algorithm. The objective is $\alg := f(S)$. Note that to guarantee feasibility, we have $\sum_{u\in S}\w_u\leq \one$, where $\one$ denotes the $d$ dimensional all-one vector.
The competitive ratio is defined as the ratio between the optimal objective $\opt$ that is achievable by an offline algorithm and the (expected) objective of the algorithm: $\mathsf{r} := \frac{\opt}{\E[\alg]}\geq 1$.

\subsection{Our Results and Techniques}

We first consider the \ovpp \emph{with slack}, i.e., there is a constant $\epsilon>0$ such that for all $u\in\Omega$, we have $\w_u\in[0,1-\epsilon]^d$, and propose a deterministic 
$O(\frac{k}{\epsilon^2})$-competitive algorithm, where $k$ is the sparsity of weight vectors.

\begin{theorem}\label{th:k_competitive_}
	For the \ovpp with $\epsilon$ slack,
	there is a (polynomial-time) deterministic $O(\frac{k}{\epsilon^2})$-competitive algorithm for the \ovpp.
\end{theorem}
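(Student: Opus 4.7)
I will use a threshold algorithm driven by an exponential price per dimension, combined with a disposal rule that leverages the $\epsilon$-slack. The competitive ratio is then obtained by a primal--dual style argument in which the prices act as dual variables and the load-potential is charged against the algorithm's submodular value.

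\textbf{Algorithm.} For each dimension $i\in[d]$ maintain a price
$p_i \;=\; A\!\left(B^{L_i/\epsilon}-1\right),$
where $L_i=\sum_{u\in S}\w_u(i)$ is the current load, and $A=\Theta(1/\epsilon)$, $B=\Theta(1)$ are constants chosen so that a single unit of load generates roughly $k/\epsilon^2$ price. When item $u$ arrives, compute the marginal gain $\Delta_u=f(S\cup\{u\})-f(S)$ and the cost $c(u)=\sum_i p_i\,\w_u(i)$. Accept $u$ iff $\Delta_u\ge c(u)$. If this makes some $L_i>1$, restore feasibility by disposing of previously accepted items in increasing order of the ratio ``current marginal value to $c(\cdot)$.'' Since $\|\w_u\|_\infty\le 1-\epsilon$ and feasibility held before the arrival, every dimension is at most $2-\epsilon$ after the acceptance, so disposal can always succeed.

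\textbf{Competitive analysis.} The analysis splits into three pieces. First, by monotone submodularity, $f(\opt)\le f(S)+\sum_{u\in\opt\setminus S}\Delta_u^{\mathrm{arr}}$, where $\Delta_u^{\mathrm{arr}}$ is the marginal gain of $u$ at its arrival evaluated against the algorithm's set at that moment. Items $u\in\opt$ that the algorithm never accepted satisfy $\Delta_u^{\mathrm{arr}}<c(u)$ by the acceptance rule, so $\sum_{u\in\opt\setminus S}\Delta_u^{\mathrm{arr}}\le\sum_{u\in\opt}c(u)=\sum_i p_i^{\max}\sum_{u\in\opt}\w_u(i)\le\sum_i p_i^{\max}$. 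Second, define the potential $\Phi=\sum_i p_i$. Using the exponential form of $p_i$ and $\w_u(i)\le 1-\epsilon$, a per-step calculation yields $\Delta\Phi\le O(k/\epsilon)\,\Delta_u$ whenever $u$ is accepted, because the sparsity bound gives at most $k$ nonzero coordinates in $\w_u$ and the inequality $B^{x/\epsilon}-1\le O(x/\epsilon)\cdot(B^{1/\epsilon}-1)$ for $x\in[0,1-\epsilon]$ controls the per-coordinate growth. Therefore $\sum_i p_i^{\max}=\Phi^{\text{final}}\le O(k/\epsilon)\cdot\sum_u\Delta_u^{\text{accept}}$. Third, relate $\sum_u\Delta_u^{\text{accept}}$ to $f(S)$: a credit scheme transfers the acceptance-time marginal of each eventually-disposed item to the item that triggered its disposal, and the disposal rule guarantees the incoming item's marginal dominates the total transferred credit by a factor $O(1/\epsilon)$. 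Combining, $\sum_u\Delta_u^{\text{accept}}\le O(1/\epsilon)\cdot f(S)$, and chaining the three inequalities gives $f(\opt)\le O(k/\epsilon^2)\cdot f(S)$.

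\textbf{Main obstacle.} The delicate point is the third piece: once items are disposed of, the naive telescoping $f(S)=\sum_u\Delta_u^{\text{accept}}$ no longer holds, and monotone submodularity only relates marginals to sets in one direction ($f(A\cup\{u\})-f(A)\ge f(B\cup\{u\})-f(B)$ for $A\subseteq B$), not the direction needed to bound disposed credit by the current value. I will address this by making the disposal rule swap out an item $v$ only when its current marginal-per-weight is a $(1-\epsilon)$-fraction smaller than that of the incoming $u$, so that the arriving item's own marginal majorizes the sum of disposed credits up to a telescoping factor of $O(1/\epsilon)$. Getting the constants to line up so that the final ratio is $O(k/\epsilon^2)$ rather than $O(k/\epsilon^3)$ requires careful co-tuning of the price parameters $A,B$ with the disposal threshold and is the calculational heart of the proof.
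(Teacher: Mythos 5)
There is a genuine gap, and it is in the core of your argument rather than in the details. Under the $\epsilon$-slack assumption weights can be as large as $1-\epsilon$, and an exponential price that depends only on the current load is not sound in that regime. Concretely: let the first item $u_0$ have marginal value $\delta\rightarrow 0$ and weight $1-\epsilon$ on dimensions $1,\dots,k$; all prices are $0$, so your rule accepts it, and each of these dimensions now has price $p_i=A\bigl(B^{(1-\epsilon)/\epsilon}-1\bigr)$, which with your tuning is $\Theta(k/\epsilon^2)\gg 1$. Then let $k$ items of value $1$ arrive, each with weight only on one of these dimensions. Each has cost $c(u)\geq p_i\,\w_u(i)\gg 1>\Delta_u$, so each is rejected; your disposal step is never reached because it is triggered only upon acceptance. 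The algorithm ends with value $\delta$ while $\opt\geq k$, so the stated algorithm is not $O(k/\epsilon^2)$-competitive. The same example falsifies the key potential claim $\Delta\Phi\leq O(k/\epsilon)\,\Delta_u$: when $L_i=0$ the acceptance test only guarantees $\Delta_u\geq c(u)=0$, yet the potential jumps by $A\bigl(B^{\w_u(i)/\epsilon}-1\bigr)$ per nonzero coordinate, and the inequality you invoke only bounds this by a quantity exponential in $1/\epsilon$, not by $O(1/\epsilon)$ times anything chargeable to $\Delta_u$. Exponential pricing is tailored to the small-weight regime where one item changes each price by an $O(1)$ multiplicative factor; with weights up to $1-\epsilon$ the acceptance decision must instead compare the arriving item's value against the value of the currently held items that would have to be evicted (the paper does exactly this via per-dimension densities $\rho_i$, continuously disposing of least-dense items and accepting only while the rate of value loss stays below $\gamma\,\v(u)$), so that a cheap blocking item can always be swapped out.

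Two further points. First, you define marginals against the current set $S$, which shrinks under disposal, so the inequality you need in your first piece, $f(u\mid S^{\mathrm{final}})\leq\Delta_u^{\mathrm{arr}}$, does not follow from submodularity; the paper avoids this by defining $\v(u)=f(u\mid A^u)$ with respect to the ever-accepted set $A$, which only grows. This is fixable, but it must be fixed. Second, your third piece (bounding the total acceptance-time marginal by $O(1/\epsilon)\cdot f(S)$) is explicitly left open, and the remedy you sketch there, a density-threshold swap rule, is in effect a different algorithm from the priced one you analyze in pieces one and two; as written, the three pieces do not refer to a single consistent acceptance/disposal rule, so the chain $f(\opt)\leq O(k/\epsilon^2)\cdot f(S)$ does not close.
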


Observe that by scaling weight vectors, Theorem~\ref{th:k_competitive_} implies a bi-criteria $(1+\epsilon, \frac{k}{\epsilon^2})$-competitive algorithm for general weight vectors, i.e., by relaxing the capacity constraint by an $\epsilon$ fraction, we can obtain a solution that is
$O(\frac{k}{\epsilon^2})$-competitive compared to the optimal solution (with the augmented capacity).

We show that our competitive ratio is optimal (up to a constant factor) for deterministic algorithms, and almost optimal (up to a logarithmic factor) for any (randomized) algorithms.
Moreover, all our following hardness results (Theorem~\ref{th:hardness_slack},~\ref{th:d-hardness} and~\ref{th:1-hardness}) hold for algorithms with \textbf{unbounded computational power}.

\begin{theorem}[Hardness with Slack]\label{th:hardness_slack}
	For the \ovpp with slack $\epsilon\in(0,\frac{1}{2})$, any deterministic algorithm has a competitive ratio $\Omega(k)$, even when the utility function is linear and all items have the same value, i.e., $f(S) := |S|$; for  randomized algorithms, the lower bound is $\Omega(\frac{k}{\log k})$.
\end{theorem}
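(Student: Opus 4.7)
The plan is to construct, for any deterministic online algorithm $\alg$, an adversarial input on which the competitive ratio is $\Omega(k)$, and then to derive the randomized $\Omega(k / \log k)$ lower bound via Yao's minimax principle.  I will restrict attention to items whose weight vectors take the value $1 - \epsilon$ on a support $S_u \subseteq [d]$ of size at most~$k$ and $0$ elsewhere; since $\epsilon < 1/2$, the inequality $2(1-\epsilon) > 1$ forces at most one accepted item to cover any given dimension, so with $f(S) = |S|$ the problem becomes online maximum-cardinality matching in a $k$-uniform hypergraph with preemption (free disposal).

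For the deterministic bound I take $d = k$ and let the adversary play adaptively against $\alg$.  At each round $t$, the adversary observes the algorithm's current knapsack $S_t$ and releases a new item $u_{t+1}$ whose support is chosen to simultaneously (i)~cover every dimension currently used by $S_t$ and (ii)~claim one previously unused dimension via a fresh ``singleton'' item released just before $u_{t+1}$.  The key inductive claim, which must be proved carefully, is that no matter whether $\alg$ accepts $u_{t+1}$---thereby being forced to dispose every item of $S_t$ and leaving a singleton knapsack---or rejects it---leaving $S_t$ intact but forfeiting the fresh singleton forever---the size $|S_t|$ stays bounded by a constant, while the offline optimum, which can retroactively pick the rejected singletons together with the items $\alg$ disposed of, grows by one per round.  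Running the adversary for $\Omega(k)$ rounds then produces an instance with $\alg = O(1)$ against $\opt = \Omega(k)$, giving the claimed ratio.

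For the randomized bound I apply Yao's principle: it suffices to exhibit a distribution over inputs on which every deterministic algorithm has expected competitive ratio $\Omega(k / \log k)$.  The distribution is obtained by randomizing the construction above: at each round, the adversary hides the ``fresh'' dimension inside a uniformly random block of $\Theta(\log k)$ candidate singletons, releasing all of them but revealing only later which one $u_{t+1}$'s support will actually claim.  Since the algorithm has no reliable way of distinguishing the true fresh dimension from its $\Theta(\log k)$ decoys, a standard entropy/averaging argument shows that its expected overlap with the eventual hidden optimum is smaller than in the deterministic case by a $\Theta(\log k)$ factor, which costs exactly the advertised gap.

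The main obstacle is the free-disposal feature, which in principle allows the algorithm to react to any wrong commitment by preempting.  Naive constructions---for example, presenting a single dense item that blocks every dimension---fail because $\alg$ can simply dispose of that item as soon as sparse items arrive.  The technical core of the argument is therefore to choose the adaptive items so that \emph{every} disposal $\alg$ performs in accepting $u_{t+1}$ is exactly offset by the one new item it gains (no net growth in knapsack size), while at the same time every rejection permanently removes a singleton that would otherwise have contributed to the offline optimum.  Engineering this balanced ``indifference under swap'' so that it holds simultaneously for all possible strategies of $\alg$ is what keeps $|\alg|=O(1)$ across all $\Omega(k)$ rounds.
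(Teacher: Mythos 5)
Your high-level framing is right: with all non-zero weights equal to $1-\epsilon > 1/2$, no two accepted items may share a dimension, so the problem becomes maximum matching in a $k$-uniform hypergraph with preemption, and the adversary's goal is to force the algorithm to keep $O(1)$ hyperedges while many disjoint hyperedges exist. This is exactly how the paper attacks the theorem. But the construction you sketch has two concrete gaps.

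First, the deterministic instance. You take $d = k$ and introduce, in each round, a ``fresh singleton'' on a new dimension plus an item $u_{t+1}$ that conflicts with everything currently held. With $d=k$ there is simply not enough room: a $k$-sparse item that ``covers every dimension currently used by $S_t$ plus one fresh dimension'' already exhausts the universe $[k]$ after one round, so subsequent fresh dimensions do not exist. The paper sidesteps this by taking $d = 2k^2$ and giving every new item $k-1$ \emph{brand-new} dimensions plus exactly one dimension from the item currently held (so each item truly has sparsity $k$, and $2k$ rounds fit inside $2k^2$ dimensions). More importantly, your singletons are bona fide items, so the algorithm that accepts every singleton and rejects every $u_{t+1}$ ends up holding $\Omega(T)$ pairwise non-conflicting singletons after $T$ rounds, matching $\opt$ and killing the lower bound. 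Your ``indifference under swap'' does not rule out this hoarding strategy. The paper avoids the issue by never releasing small items at all: every item has exactly $k$ non-zero coordinates, every new item conflicts with the currently held item, and so the algorithm provably holds at most one item at all times, while $k$ mutually disjoint items can be exhibited in the sequence.

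Second, the randomized bound. You propose hiding the fresh dimension among $\Theta(\log k)$ decoy singletons per round. That does not give the right quantitative tradeoff: with $\ell$ rounds and $\Theta(\log k)$ decoys per round, the algorithm keeps the ``true'' item with probability $\Theta(1/\log k)$ per round, so it ends up with $\Theta(\ell/\log k)$ items in expectation rather than $O(1)$. The paper instead uses $\ell = \Theta(k/\log k)$ phases with $\Theta(\ell)$ indistinguishable candidates per phase, exactly one of which (chosen uniformly at random, revealed only after the phase) is the ``good'' item; this drives the per-phase success probability down to $O(1/\ell)$ so that the expected number of good items kept is $O(1)$. A separate union-bound argument (Lemma~\ref{lemma:bad_conflict}) shows that with probability $1-1/\ell^2$ any two ``bad'' items conflict, so the algorithm also keeps $O(1)$ bad items. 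Neither of these two ingredients is present in your sketch, and without them the $\Omega(k/\log k)$ bound does not follow.
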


We then consider the hardness of the \ovpp (without slack) and show that no (randomized) algorithm can achieve any $o(k)$-competitive ratio.
The following theorem is proved in Section~\ref{sec:hardness} by constructing a distribution of hard instances.
Since in our hard instances, the sparsity $k=d$, the following hardness result implies that it is impossible to achieve any $o(d)$ competitive ratio in general.

\begin{theorem}[Hardness without Slack]\label{th:d-hardness}
	Any (randomized) algorithm for the \ovpp has a competitive ratio $\Omega(k)$, even when $f(S) := |S|$.
\end{theorem}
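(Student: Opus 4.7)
The plan is to apply Yao's minimax principle: exhibit a distribution $\mathcal{D}$ over instances of the \ovpp (with $f(S) = |S|$) such that every deterministic online algorithm $\alg$ satisfies $\mathbb{E}_\mathcal{D}[\opt] \geq \Omega(k) \cdot \mathbb{E}_\mathcal{D}[\alg]$. This will imply the claimed lower bound for arbitrary randomized algorithms against worst-case instances, and since we will take $k = d$ in the construction, it will simultaneously yield the $\Omega(d)$ bound mentioned in the remark preceding the theorem.

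The distribution will use $d = k$ dimensions, with items whose weight vectors are $k$-sparse and have some coordinates equal to exactly $1$, which is where the absence of slack is essential. The instance proceeds in $L$ phases, where in phase $\ell \in [L]$ a collection of items is revealed that is pairwise compatible within the phase and whose feasible cardinality grows geometrically in $\ell$. The key design property is that items across phases are linked through shared weight-$1$ coordinates, so that accepting a single later-phase item demands disposing of \emph{several} previously-accepted items in order to free capacity on the shared coordinate. The adversary then samples a random terminal phase $\ell^* \in [L]$ from a suitable heavy-tailed distribution and reveals only phases $1,\ldots,\ell^*$.

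Given this setup, I would fix any deterministic $\alg$ and track $a_\ell$, the number of items $\alg$ retains at the end of phase $\ell$. The cross-phase conflict pattern forces a recursion of the form $a_\ell \le a_{\ell-1} + (\text{phase-}\ell\text{ acceptances}) - (\text{disposal penalty proportional to }a_{\ell-1})$, while $\opt_{\ell^*}$ equals the within-phase maximum cardinality at phase $\ell^*$, which grows geometrically. Averaging this bound against the distribution on $\ell^*$ and comparing with $\opt_{\ell^*}$ yields $\mathbb{E}_\mathcal{D}[\opt] / \mathbb{E}_\mathcal{D}[\alg] = \Omega(k)$, completing the proof via Yao.

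The main obstacle I anticipate is engineering the cross-phase conflict so that free disposal is genuinely costly. A naive hierarchical construction, in which later phases simply refine earlier ones into more, finer items, is defeated by the algorithm disposing of all old items and accepting the new phase at zero net cost in $|S|$. The no-slack regime is indispensable here: placing weight exactly $1$ on a coordinate of a later-phase item makes that item conflict with every earlier-phase item touching the same coordinate, and by arranging for multiple earlier-phase items to share each such coordinate, a single disposal cannot suffice to admit a later item, so the natural ``switching'' strategy strictly loses $|S|$. Calibrating these weight vectors so that the disposal penalty is rigid enough to drive the ratio to $\Omega(k)$, rather than $\Omega(k/\log k)$ as in the slack case (Theorem~\ref{th:hardness_slack}), is the main quantitative step.
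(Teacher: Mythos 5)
Your high-level framing is right: you correctly invoke Yao's principle, set $k = d$ and $f(S)=|S|$, and recognize that the absence of slack must be exploited. But the construction you sketch is structurally the opposite of what the paper does, and as described it has a gap that I don't see how to close.

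You propose phases in which the items are \emph{pairwise compatible}, with the feasible within-phase cardinality growing geometrically, a random \emph{stopping time} $\ell^*$, and cross-phase conflicts designed so that ``switching'' from phase $\ell-1$ to phase $\ell$ strictly loses $|S|$. The problem is that these design goals are mutually inconsistent under free disposal. If all items in phase $\ell^*$ are pairwise compatible and $\opt$ equals their cardinality, then at any point the algorithm can simply discard everything it holds and accept the entire current phase, achieving exactly $\opt_{\ell}$ for every $\ell$, hence matching the offline optimum at the stopping time. ``Switching is costly'' in the sense that the \emph{net} change from phase $\ell-1$'s items to phase $\ell$'s items is a loss only if the new phase has strictly fewer feasible items, which contradicts the geometric growth you need for the ratio. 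Random-stopping arguments of this flavor do work for irrevocable online decisions, but free disposal lets the algorithm ``re-solve from scratch'' on the items seen so far, and your mechanism doesn't defeat that. You flag the quantitative calibration as the ``main step,'' but the issue is earlier: the qualitative shape of the construction doesn't produce a hard choice for the algorithm.

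The paper's construction inverts your design on both axes. In phase $t$ the $d+1-t$ items are pairwise \emph{conflicting} (each pair shares a coordinate summing to $1+\delta$), so the algorithm can retain at most one per phase. The hidden randomness is a uniformly random permutation $\sigma$ of $[d]$, revealed one coordinate $\sigma_t$ per phase \emph{after} the algorithm commits in phase $t$. Exactly one item $(t,1)$ per phase is ``correct'' — the one whose big coordinate is $\sigma_t$, which gets zeroed out in all later phases — and the other $d-t$ items block all subsequent phases. Weights are calibrated as $1-(2^t-1)\delta$ on the big coordinate and $2^t\delta$ on the small ones, with $\delta = 2^{-2d}$, precisely so that (i) same-phase items conflict, (ii) a wrong phase-$t$ choice conflicts with every phase-$(t+1)$ item, and (iii) replacing a wrong phase-$t$ item with a phase-$(t+1)$ item never hurts (Lemma~\ref{lemma:conflict}), so the algorithm can be charged at most $1$ for phase~$t$ only with probability $\frac{1}{d+1-t}$. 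Truncating at phase $d/2$ makes the sum $\sum_{t<d/2}\frac{1}{d+1-t}=\Theta(1)$ while $\opt=d/2$, giving $\Omega(d)=\Omega(k)$. The crucial feature your sketch lacks is this per-phase hidden ``correct option'' revealed only after commitment; the adversarial randomness lives \emph{inside} each phase, not in a stopping time.
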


As shown by~\cite{stoc/KesselheimTRV14}, for the \ovpp with random arrival order, if we have $\| \w_u \|_\infty = O(\frac{\epsilon^2}{\log k})$ for all items $u\in\Omega$, then a $(1+\epsilon)$ competitive ratio can be obtained.
Hence, a natural question is whether better ratio can be achieved under this ``small weight'' assumption.
For example, if $\max_{u\in\Omega}\{\|\w_u\|_\infty\}$ is arbitrarily small, is it possible to achieve a $(1+\epsilon)$ competitive ratio like existing works~\cite{esa/FeldmanHKMS10,ec/DevanurJSW11,icalp/MolinaroR12,stoc/KesselheimTRV14}?

Unfortunately, we show in Section~\ref{sec:hardness_light} that, even when all weights are arbitrarily small, it is still not possible to achieve any constant competitive ratio.

\begin{theorem}[Hardness under Small Weight Assumption]\label{th:1-hardness}
	There does not exist any (randomized) algorithm with an $o(\frac{\log k}{\log\log k})$ competitive ratio for the \ovpp, even when $\max_{u\in\Omega}\{\|\w_u\|_\infty\}$ is arbitrarily small and $f(S) := |S|$.
\end{theorem}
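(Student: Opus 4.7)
The plan is to apply Yao's minimax principle: exhibit a distribution $\mathcal{D}$ over instances on which every deterministic online algorithm incurs expected competitive ratio $\Omega(\log k / \log\log k)$, which then transfers to randomized algorithms via standard arguments. Setting $L = \Theta(\log k/\log\log k)$ so that $L^L = \Theta(k)$, the hard instances consist of up to $L$ ``phases'' of items, with a random stopping phase $i^* \in \{1,\dots,L\}$ drawn from a carefully chosen distribution and revealed only implicitly when the sequence terminates. All weight entries are set to $\epsilon/k$ for an arbitrarily small parameter $\epsilon > 0$, so that $\|\w_u\|_\infty$ is arbitrarily small while each item has sparsity at most $k$.

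The dimensions are organized hierarchically: each is labeled by a length-$L$ string over an alphabet of size $\Theta(L)$, giving $\Theta(L^L) = \Theta(k)$ dimensions. In phase $i$, items are grouped into ``branches,'' where each branch corresponds to an additional $\Theta(L)$-way refinement of the preceding phases' choices, and each item's $k$-dimensional support is exactly the set of dimensions whose labels are consistent with that branch. The adversary commits to one branch per phase uniformly at random, giving the ``coherent'' path. The goals are to establish (a) that if the instance terminates at phase $i^*$ then $\opt = \Omega(V \cdot i^*)$ for some unit value $V$ (roughly $1/\epsilon$), witnessed by a feasible ``coherent'' set using items from every phase along the realized path, and (b) that for any deterministic algorithm, each item kept at phase $i$ commits to a specific branch at that phase, and only a $1/\Theta(L)$ fraction of such commitments survives the adversary's random branch in the next phase. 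Averaging across $L$ phases then yields $\E[\alg] = O(V)$ while $\E[\opt] = \Omega(V \cdot L)$, giving expected ratio $\Omega(L) = \Omega(\log k / \log\log k)$.

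The main obstacle is that free disposal is extremely powerful: the algorithm may effectively ``accept everything'' and later choose its final $S_T$ as any feasible subset of items ever seen. The lower bound must therefore come entirely from the interplay between capacity constraints (forcing the algorithm to dispose of earlier items in order to admit later ones) and the irreversibility of disposal. The delicate task is to design the overlap of dimensional supports across phases so that keeping items compatible with every possible stopping time is itself infeasible, and any commitment made at phase $i$ is rendered inconsistent by the adversary's random choices at phases $i{+}1,\dots,L$. The parameter identification $k = \Theta(L^L)$, which is precisely what sparsity $k$ permits for an $L$-deep branching game with branching factor $\Theta(L)$, is what makes the lower bound $\log k/\log\log k$ rather than $\log k$ or $k$, and is also where I expect the bulk of the calculation to live: verifying that $V$ is large enough (given the fractional-packing capacity on the shared high-level dimensions) and that the per-phase survival probability is exactly $\Theta(1/L)$ under the adversarial distribution.
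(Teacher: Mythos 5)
Your high-level plan matches the paper's: apply Yao's principle to an $\ell$-phase branching game with branching factor $\Theta(\ell)$, so that $d = \Theta(\ell^\ell)$ gives $\ell = \Theta(\log d/\log\log d)$, with small weights $\epsilon$ and $\frac{1}{\epsilon}$ copies per item type so $\opt = \Theta(\ell/\epsilon)$ while the algorithm's per-phase survival probability is $\Theta(1/\ell)$, summing to $O(1/\epsilon)$. This parametric skeleton is correct and the same as the paper's.

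However, there are two genuine gaps. First, you introduce a random stopping phase $i^*$, which is not in the paper's construction and, as far as I can see, works against you: if the sequence stops early, no further branching occurs, the items the algorithm is holding from the last arrived phase all survive, and your survival-probability argument has nothing to bite on. The paper instead always runs all $\ell$ phases and gets the bound from the branching alone; the random stopping time appears to be an unnecessary embellishment that would require a different (and harder) analysis. Second, and more seriously, you flag as ``the delicate task'' exactly the step that carries the whole proof but leave it unresolved: free disposal lets the algorithm accept everything and decide late, so you must argue that an algorithm may \emph{WLOG} dispose of any item whose support intersects the continuing subtree. The paper resolves this with two specific devices you do not supply: (i) $\ell$ dedicated ``phase-indicator'' dimensions in $I = [\ell]$ (one per phase, with weight $\epsilon$) that cap each phase's contribution at $\frac{1}{\epsilon}$ items regardless of hoarding; and (ii) a construction in which every ``bad'' item from phase $i$ hits \emph{all} of $J_{i+1}$ (the paper uses supports $J_i \setminus S^{(i)}_j$, i.e.\ complements of subtrees, not subtrees themselves as you describe), so it conflicts with every item in every later phase, and since the last phase always has $\frac{1}{\epsilon}$ items available, keeping a bad item never pays. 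Without these two devices the survival-probability calculation does not upper-bound the algorithm's value, because the algorithm can simply over-accept and pick an unconstrained subset in hindsight. The intuition is right; the mechanism making it rigorous is the missing piece.
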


Our hardness result implies that even with free disposal, the problem with arbitrary arrival order is strictly harder than its counter part when the arrival order is random. 

\paragraph{Our Techniques.}
To handle submodular functions, we use the standard technique by considering marginal cost
of an item, thereby essentially reducing to linear objective functions.
However, observe that the hardness results in Theorems~\ref{th:hardness_slack},~\ref{th:d-hardness} and~\ref{th:1-hardness} hold even for linear objective function where every item has the same value.
The main difficulty of the problem comes from the weight vectors of items, i.e., when items conflict with one another due to multiple dimensions, it is difficult to decide which items to accept.  Indeed, even the offline version of the problem has an $\Omega(\frac{k}{\log k})$ \NP-hardness of approximation result~\cite{cc/HazanSS06,walcom/OualiFS11}.

For the case when $d=1$, i.e., $\w_u \in [0,1]$, it is very natural to compare items based on their \emph{densities}~\cite{tcs/HanKM15}, i.e., the value per unit of weight, and accept the items with maximum densities.  A naive solution is to use the maximum weight $\| w_u \|_\infty$
to reduce the problem to the $1$-dimensional case, but this can lead to $\Omega(d)$-competitive ratio, even though each weight vector has sparsity $k \ll d$.  To overcome this difficulty, we define for each item a density on \textbf{each} of the $d$ dimensions, and make items comparable on any particular dimension.

Even though our algorithm is deterministic, we borrow techniques from randomized algorithms
for a variant of the problem with matroid constraints~\cite{soda/BuchbinderFS15a,soda/ChanHJKT17}.  Our algorithm maintains a fractional solution, which is rounded at every step to achieve an integer solution.  When a new item arrives, we try to 
accept the item by \textbf{continuously} increasing its accepted fraction (up to $1$), while
for each of its $k$ non-zero dimensions, we decrease the fraction of
the currently least dense accepted item, as long as the rate of increase in value due to the new item is at least some factor times the rate of loss due to disposing of items fractionally.

The rounding is simple after every step.  If the new item is accepted with a fraction larger
than some threshold~$\alpha$, then the new item will be accepted completely in the integer solution; at the same time, if the fraction of some item drops below some threshold~$\beta$,
then the corresponding item will be disposed of completely in the integer solution.  The $\epsilon$ slack assumption is used to bound the loss of utility due to rounding.  The high level intuition of why the competitive ratio depends on the sparsity~$k$ (as opposed
to the total number~$d$ of dimensions) is that when a new item is fractionally increased,
at most~$k$ dimensions can cause other items to be fractionally disposed of.

Then, we apply a standard argument to compare the value of items that are eventually accepted (the utility of our algorithm) with the value of items that are \textbf{ever} accepted (but maybe disposed of later).  The value of the latter is in turn compared with that of an optimal solution to give the competitive ratio.

\subsection{Related Work}

The \ovpp (with free disposal) is general enough to subsume many well-known online problems.
For instance, the special case $d=1$ becomes the \textsf{Online Knapsack Problem}~\cite{tcs/HanKM15}.  The offline version of the problem captures the \textsf{$k$-Hypergraph $b$-Matching Problem} (with sparsity $k$ and
$\w_u\in\{0,\frac{1}{b}\}^d$, where $d$ is the number of vertices),
for which an $\Omega(\frac{k}{b\log k})$ \NP-hardness of approximation 
is known~\cite{cc/HazanSS06,walcom/OualiFS11}, for any $b\leq \frac{k}{\log k}$.
In contrast, our hardness results are due to the online nature of the problem and hold even
if the algorithms have unbounded computational power.

\paragraph{Free Disposal.}
The free disposal setting was first proposed by Feldman et al.~\cite{wine/FeldmanKMMP09} for the online edge-weighted bipartite matching problem with arbitrary arrival order, in which the decision whether an online node is matched to an offline node
must be made when the online node arrives.
However, an offline node can dispose of its currently matched node, if the new online node is more beneficial.
They showed that the competitive ratio approaches $1-\frac{1}{e}$ when the number of online nodes each offline node can accept approaches infinity.
It can be shown that in many online (edge-weighted) problems with arbitrary arrival order, no algorithm can achieve any bounded competitive ratio without the free disposal assumption.  Hence, this setting has been
adopted by many other works~\cite{soda/ConstantinFMP09,ec/BabaioffHK09,stacs/EpsteinLSW13,ec/Devanur0KMY13,algorithmica/HanKM14,scheduling/Fung14,esa/CharikarHN14,tcs/HanKM15,soda/BuchbinderFS15a}.

\paragraph{\ogap (OGAP).}
Feldman et al.~\cite{wine/FeldmanKMMP09} also considered a more general
online biparte matching problem,
where each edge $e$ has both a value $v_e$ and a weight $w_e$, and each offline node has a capacity constraint on the sum of weights of matched edges (assume without loss
of generality that all capacities are $1$).
It can be easily shown that the problem is a special case of the \ovpp with $d$ equal to the total number of nodes, and sparsity $k=2$:
every item represents an edge $e$, and has value $v_e$, weight $1$ on the dimension corresponding to the online endpoint, and weight $w_e$ on the dimension corresponding to the offline endpoint.

For the problem when each edge has arbitrary weight and each offline node has capacity 1,
it is well-known that the greedy algorithm that assigns each online node to the offline node with maximum marginal increase in the objective is $2$-competitive, while no algorithm is known to have a competitive ratio strictly smaller than $2$.
However, several special cases of the problem were analyzed and better competitive ratios have been achieved~\cite{soda/AggarwalGKM11,soda/DevanurJK13,esa/CharikarHN14,esa/AbolhassaniCCEH16}.

Apart from vector packing constraints, the online submodular maximization problem with
free disposal has been studied under matroid constraints~\cite{soda/BuchbinderFS15a,soda/ChanHJKT17}.
In particular, the uniform and the partition matroids can be thought of
special cases of vector packing constraints, where each item's weight vector has sparsity one and the same value for non-zero coordinate.  However, using special properties of partition matroids, the exact optimal competitive ratio can be derived in~\cite{soda/ChanHJKT17},
from which we also borrow relevant techniques to design our online algorithm.

\paragraph{Other Online Models.}
Kesselheim et al.~\cite{stoc/KesselheimTRV14} considered a variant of the problem when items (which they called requests) arrive in random order and have small weights compared to the total capacity;
this is also known as the \emph{secretary setting}, and free disposal is not allowed.
They considered a more general setting in which an item can be accepted with 
more than one option,
i.e., each item has different utilities and different weight vectors for different options.
For every $\delta\in (0,\frac{1}{2})$, for the case
when every weight vector is in $[0,\delta]^d$, they proposed an $O(k^{\frac{\delta}{1-\delta}})$-competitive algorithm, and a $(1+\epsilon)$-competitive algorithm when $\delta = O(\frac{\epsilon^2}{\log k})$, for $\epsilon\in(0,1)$.
In the random arrival order framework, many works assumed that the weights of items are much smaller than the total capacity~\cite{esa/FeldmanHKMS10,ec/DevanurJSW11,icalp/MolinaroR12,stoc/KesselheimTRV14}.  In comparison, our algorithm just needs the weaker $\epsilon$ slack assumption that no weight is more than $1 - \epsilon$ fraction of the total capacity.

\paragraph{Other Related Problems.}

The \textsf{Online Vector Bin Packing} problem~\cite{stoc/AzarCKS13,soda/AzarCFR16} is similar to the problem we consider in this paper.
In the problem, items (with weight $\w_u\in [0,1]^d$) arrive online in an arbitrary order and the objective is to pack all items into a minimum number of knapsacks, each with capacity $\one$.
The current best competitive ratio for the problem is $O(d)$~\cite{jct/GareyGJ76} while the best hardness result is $\Omega(d^{1-\epsilon})$~\cite{stoc/AzarCKS13}, for any constant $\epsilon>0$.

\paragraph{Future Work.}

We believe that it is an interesting open problem to see whether an $O(k)$-competitive ratio can be achieved for general instances, i.e., $\w_u\in[0,1]^d$. However, at least we know that it is impossible to do so using deterministic algorithms (see Lemma~\ref{lemma:hardness_determ}).

Actually, it is interesting to observe that
similar slack assumptions on the weight vectors of items have been made by several other literatures~\cite{siamcomp/ChekuriVZ14,stoc/AzarCKS13,stoc/KesselheimTRV14}.
For example, for the \textsf{Online Packing LPs} problem (with random arrival order)~\cite{stoc/KesselheimTRV14}, the competitive ratio $O(k^{\frac{\delta}{1-\delta}})$ holds only when $\w_u\in[0,\delta]^d$ for all $u\in\Omega$, for some $\delta\leq\frac{1}{2}$.
For the \textsf{Online Vector Bin Packing} problem~\cite{stoc/AzarCKS13}, while a hardness result $\Omega(d^{1-\epsilon})$ on the  competitive ratio is proof for general instances with $\w_u\in[0,1]^d$; when $\w_u \in [0,\frac{1}{B}]^d$ for some $B\geq 2$, they proposed an $O(d^{\frac{1}{B-1}}(\log d)^{\frac{B}{B-1}})$-competitive algorithm.

Another interesting open problem is whether the $O(k)$-competitive ratio can be improved for the problem under the ``small weight assumption''.
Note that we have shown in Theorem~\ref{th:1-hardness} that achieving a constant competitive ratio is impossible.

\section{Preliminaries}\label{sec:preli}

We use $\Omega$ to denote the set of items, which are not known by the algorithm
initially and arrive one by one.
Assume that each of the $d$ dimensions of the knapsack has capacity $1$.
For $u \in \Omega$, the weight vector $\w_u \in [0,1]^d$  is
known to the algorithm only when item $u$ arrives.
A set $S\subset\Omega$ of items  is \emph{feasible} if $\sum_{u\in S} \w_u \leq \one$.
The utility of $S$ is $f(S)$, where $f$ is a non-negative monotone submodular function.

For a positive integer $t$, we use $[t]$ to denote $\{1,2,\ldots,t\}$.

We say that an item $u$ is \emph{discarded} if it is not accepted when it arrives; it is \emph{disposed of} if it is accepted when it arrives, but later dropped to maintain feasibility.

Note that in general (without constant slack), no deterministic algorithm for the problem is competitive, even with linear utility function and when $d=k$.
A similar result when $k=1$ has been shown by Iwama and Zhang~\cite{approx/IwamaZ07}.

\begin{lemma}[Generalization of \cite{approx/IwamaZ07}]
\label{lemma:hardness_determ}
	Any deterministic algorithm has a competitive ratio $\Omega(\sqrt{\frac{k}{\epsilon}})$ for the \ovpp with weight vectors in $[0,1-\epsilon]^d$, even when the utility function is linear and $d=k$.
\end{lemma}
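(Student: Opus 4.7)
The plan is to generalize the adaptive adversary of Iwama and Zhang~\cite{approx/IwamaZ07}, which handles the $k=1$ case of our problem (online removable knapsack with weights in $[0,1-\epsilon]$ and unit values) and establishes an $\Omega(1/\sqrt{\epsilon})$ lower bound. I would first revisit that construction: it presents a carefully chosen sequence of unit-value items with decreasing weights and an adaptive stopping rule, so that at every potential stopping time either the offline optimum is too large compared to what the algorithm has kept, or the algorithm has been forced to discard its heavy items in favor of lighter ones, in either case losing a factor of $\Omega(1/\sqrt{\epsilon})$. A preliminary check is that the free-disposal feature does not ``rescue'' the algorithm from the trap, which is already implicit in Iwama and Zhang's argument since they work with the removable model.

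To lift the $k=1$ bound to the $k=d$ dimensional setting and gain the extra $\sqrt{k}$ factor, the plan is to run $\Theta(\sqrt{k/\epsilon})$ ``copies'' of the 1D Iwama--Zhang gadget, each localized to a distinct group of coordinates (partitioning $[k]$ into $\Theta(\sqrt{k/\epsilon})$ disjoint blocks of coordinates, with the items of a copy having weight vectors supported on that block). The adversary plays these copies in a sequential, adaptive manner, choosing which copy to advance next based on the algorithm's decisions in the previous copies. Because the copies live on disjoint coordinate supports, the offline optimum can simultaneously achieve the per-copy optimum on all $\Theta(\sqrt{k/\epsilon})$ copies, producing a global $\opt$ proportional to $\sqrt{k/\epsilon}$ times the per-copy optimum, while the algorithm, by the 1D result, must lose an additional $\Omega(1/\sqrt{\epsilon})$ factor on the copy on which the adversary halts.

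The main step, and the main obstacle, is ensuring that the per-copy losses genuinely compose to $\Omega(\sqrt{k/\epsilon})$ rather than collapsing to the single 1D bound of $\Omega(1/\sqrt{\epsilon})$: if the algorithm could play each copy in isolation and optimally, parallel independent copies alone would not amplify the ratio. The extra $\sqrt{k}$ factor must come from using the sparsity-$k$ freedom to introduce items whose weight vectors touch several blocks at once, so that the algorithm's commitments on earlier copies restrict its feasible set on later ones and it cannot simply match the per-copy optimum on every block. Carefully tuning the per-copy weights, the batch sizes, and the adversary's branching rule (so that the final ratio works out precisely to $\Omega(\sqrt{k/\epsilon})$ rather than to a looser composite bound), together with verifying that free disposal cannot be exploited to recover value across blocks, is the delicate quantitative content of the proof.
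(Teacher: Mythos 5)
Your plan starts from the right place (Iwama--Zhang for $k=1$, then lift to $k$ dimensions), and you correctly identify the central obstacle: $\Theta(\sqrt{k/\epsilon})$ genuinely disjoint 1D gadgets would let the algorithm play each block to its own $\Omega(1/\sqrt{\epsilon})$ guarantee and the overall ratio would collapse back to $\Omega(1/\sqrt{\epsilon})$. But having named that obstacle, you leave it unresolved: ``introduce items whose weight vectors touch several blocks'' is exactly the load-bearing step, and you do not specify which items, which blocks, what weights, or why the coupling prevents block-wise near-optimal play while still letting the offline solution profit. As written, the proposal is a plan to look for a construction rather than a construction, so I cannot verify that the $\sqrt{k}$ amplification actually materializes.

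The paper's construction is considerably simpler than parallel-copies-plus-coupling, and I think you would profit from seeing how short the gap is between your (correct) diagnosis and the fix. Instead of many gadgets with a coupling layer, there is a single heavy item of value $1$ with weight $1-\epsilon$ on \emph{all} $d=k$ dimensions. It is followed by $\frac{1}{2\epsilon}$ light items per dimension, each with weight $2\epsilon$ on that one dimension, zero elsewhere, and value $\sqrt{\epsilon/k}$. Because $1-\epsilon+2\epsilon>1$, the heavy item conflicts with \emph{every} light item on some dimension, so the algorithm can never hold the heavy item together with any light item; this is the coupling, and it costs nothing because the heavy item alone already has the full sparsity $k$. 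The adversary's rule is binary: stop if the heavy item is rejected, and stop the moment it is disposed of. If the algorithm keeps the heavy item to the end, $\alg=1$ while $\opt=\frac{k}{2\epsilon}\cdot\sqrt{\epsilon/k}=\sqrt{k/(4\epsilon)}$; if it ever disposes of the heavy item for a light one, $\alg=\sqrt{\epsilon/k}$ while $\opt\ge 1$. Both branches give ratio $\Omega(\sqrt{k/\epsilon})$. In short, the $\sqrt{k}$ gain comes not from running $k$ (or $\sqrt{k/\epsilon}$) gadgets side by side but from making a single heavy item whose $k$-sparse footprint simultaneously blocks $\Theta(k/\epsilon)$ cheap items that the offline solution can keep; the value $\sqrt{\epsilon/k}$ of a light item is then chosen to balance the two branches. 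You should replace your ``parallel copies with coupling to be determined'' step with this one-heavy-item-on-all-dimensions device (or an equivalent explicit construction) before the argument is complete.
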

\begin{proof}
	Since the algorithm is deterministic, we can assume that the instance is adaptive.
	
	Consider the following instance with $k=d$.
	Let the first item have value $1$ and weight $1-\epsilon$ on all $d$ dimensions; the following (small) items have value $\sqrt{\frac{\epsilon}{k}}$ and weight $2\epsilon$ on one of the $d$ dimension (and $0$ otherwise).
	Stop the sequence immediately if the first item is not accepted.
	Otherwise let there be $\frac{1}{2\epsilon}$ items on each of the $d$ dimensions.
	Note that to accept any of the ``small'' items, the first item must be disposed of.
	We stop the sequence immediately once the first item is disposed of.
	
	It can be easily observe that we have either $\alg = 1$ and $\opt = \sqrt{\frac{k}{4\epsilon}}$, or $\alg = \sqrt{\frac{\epsilon}{k}}$ and $\opt\geq 1$, in both cases the competitive ratio is $\Omega(\sqrt{\frac{k}{\epsilon}})$.
\end{proof}

Note that the above hardness result (when $k=1$) also holds for the \ogap (with one offline node).

We use $\opt$ to denote both the optimal utility, and the feasible set that achieves this value.
The meaning will be clear from the context.

\section{Online Algorithm for Weight Vectors with Slack}\label{sec:k_sparse}

In this section, we give an online algorithm for weight vectors with constant slack~$\epsilon>0$.
Specifically, the algorithm is given some constant parameter $\epsilon > 0$ initially such that for all items $u \in \Omega$, its weight vector satisfies
$\|\w_u\|_\infty \leq	1 - \epsilon$.  On the other hand,
the algorithm does not need to know upfront the upper bound $k$
on the sparsity of the weight vectors.

\subsection{Deterministic Online Algorithm}

\paragraph{Notation.} 
During the execution of an algorithm, for each item $u \in \Omega$,
we use $S^u$ and $A^u$ to denote the feasible set of maintained items and the set of items that have ever been accepted, respectively, 
at the moment \textbf{just before} the arrival of item $u$.

We define the \emph{value} of $u$ as $\v(u) := f(u | A^u) = f(A^u\cup\{u\}) - f(A^u)$.
Note that the value of an item depends on the algorithm and the arrival order of items.
For $u\in \Omega$, for each $i \in [d]$, define the \emph{density} of $u$ at dimension $i$ as $\rho_u(i) := \frac{\v(u)}{\w_u(i)}$ if $\w_u(i) \neq 0$ and $\rho_u(i) := \infty$ otherwise.
By considering a lexicographical order on $\Omega$, 
we may assume that all ties in values and densities can be resolved consistently.

For a vector $\vx \in [0,1]^\Omega$, we use $\vx(u)$ to denote the component corresponding to coordinate $u\in \Omega$.
We overload the notation $\ceil{\vx}$ to mean
either the support $\ceil{\vx} := \{u \in \Omega : \vx(u) > 0\}$
or its indicator vector in $\{ 0,1 \}^\Omega$  such that $\ceil{\vx}(u) = \ceil{\vx(u)}$.

\paragraph{Online Algorithm.}
The details are given in Algorithm~\ref{alg:kcsparse}, which defines
the parameters $\beta := 1-\epsilon$, 
$\alpha := \sqrt{\beta} = 1 - \Theta(\epsilon)$ and $\gamma := \frac{1}{2}(1-\frac{\beta}{\alpha}) = \Theta(\epsilon)$.
The algorithm keeps a (fractional) vector $\vs \in [0,1]^\Omega$,
which is related to the actual feasible set $S$
maintained by the algorithm via the loop invariant
(of the \textbf{for} loop in lines~\ref{ln:for_begin}-\ref{ln:for_end}): 
$S = \lceil \vs \rceil$.  Specifically, when an item $u$ arrives,
the vector $\vs$ might be modified such that the coordinate $\vs(u)$ might
be increased and/or other coordinates might be decreased;
after one iteration of the loop, the feasible set $S$ is changed according
to the loop invariant.  The algorithm also maintains an auxiliary vector
$\va\in[0,1]^\Omega$ that keeps track of the maximum fraction of item $u$
that has ever been accepted.

\paragraph{Algorithm Intuition.}
The algorithm solves a fractional variant
behind the scenes using a linear objective function defined by $\v$.  For each dimension~$i \in [d]$,
it assumes that the capacity is $\beta < 1$.  Upon the arrival of a new element~$u \in \Omega$,
the algorithm tries to increase the fraction of item $u$ accepted via the parameter $\theta \in [0,1]$ in the 
\textbf{do}...\textbf{while} loop starting at line~\ref{ln:do_begin}.
For each dimension~$i \in [d]$ whose capacity is saturated (at $\beta$)
and $\w_u(i) > 0$, to further increase the fraction of item~$u$ accepted,
some item~$u^\theta_i$ with the least density $\rho_i$ will have its fraction decreased in order to make room for item $u$.
Hence, with respect to $\theta$, the value decreases at
a rate at most $\sum_{i} \w_u(i) \cdot \rho_i(u^\theta_i)$ due to disposing of
fractional items.
We keep on increasing $\theta$ as long as this rate of loss is
less than $\gamma$ times $\v(u)$ (which is the rate of increase in value due to item $u$). 

After trying to increase the fraction of item~$u$ (and disposing of other items fractionally),
the algorithm commits to this change only if at least $\alpha$ fraction
of item~$u$ is accepted, in which case any item whose accepted fraction is less than $\beta$ will be totally disposed of.

\begin{algorithm}[!ht]
	\SetAlgoLined
	\Parameter{$\alpha := \sqrt{1-\epsilon}, \beta := 1-\epsilon$, $\gamma := \frac{1}{2}(1-\sqrt{1-\epsilon})$}
	initialize $\vs, \va \in [0,1]^\Omega$ as all zero vectors; \Comment{$\lceil \vs \rceil$ is the current feasible solution}\\
	\For{each round when $u$ arrives \label{ln:for_begin}}{
		Define $\v(u) := f(u|\lceil \va \rceil)$\;
		Initialize $\theta \leftarrow 0$, $\vx^0 \leftarrow \vs$\;
		\Do{$\theta < 1$ and $\gamma\cdot \v(u) > \sum_{i\in [d]} \w_u(i) \cdot \rho_i(u_i^\theta)$ \label{ln:do_begin}}{
			Increase $\theta$ continuously (variables $\vx^\theta$ and $u^\theta_i$ all depend on $\theta$):
			
			\For{every $i\in[d]$}{
				\If{$\sum_{v\in \Omega}\vx^\theta(v)\w_v(i) = \beta$ and $\w_u(i) > 0$}{
					Set $u^\theta_i \leftarrow \arg\min \{ \rho_i(v): v\in \Omega\setminus\{u\}, \vx^\theta(v)\w_v(i) > 0 \}$\;
				}
				\If{$\sum_{v\in \Omega}\vx^\theta(v)\w_v(i) < \beta$ or $\w_u(i) = 0$}{
					Set $u^\theta_i \leftarrow \bot$ and $\rho_i(u_i^\theta) \leftarrow 0$\;
				}
			}
			Change $\vx^\theta(v)$ (for all $v\in \Omega$) at rate:
			$$
			\frac{d\vx^\theta(v)}{d\theta} = 
			\begin{cases}
			1, \quad v = u;\\
			- \max_{i\in[d]:u^\theta_i = v} \left\{ \frac{\w_u(i)}{\w_{u^\theta_i}(i)} \right\}, \quad v\in \{ u^\theta_i \}_{i\in[d]};\\
			0, \quad \text{otherwise.}
			\end{cases}
			$$\\
		}
		\If{$\theta \geq \alpha$}{
			$\vs \leftarrow \vx^\theta$, $\va(u) \leftarrow \vx^\theta(u)$;\Comment{update phase} \\
			\For{$v \in \Omega$ with $\vs(v) < \beta$}{
				$\vs(v) \leftarrow 0$;\label{line:remove_fraction} \Comment{dispose of small fractions} \\
			}
		}\Comment{if $\theta < \alpha$, then $\vs$ and $\va$ will not be changed}
	} \label{ln:for_end}
	\Return $\lceil \vs \rceil$.
	\caption{Online Algorithm}
	\label{alg:kcsparse}
\end{algorithm}

\subsection{Competitive Analysis} \label{sec:analysis}

For notational convenience, we use the superscripted versions (e.g.,
$\vs^u$, $\va^u$, $S^u = \ceil{\vs^u}$, $A^u = \ceil{\va^u}$) to indicate
the state of the variables at the beginning of the iteration in the
\textbf{for} loop (starting at line~\ref{ln:for_begin}) when item~$u$ arrives.
When we say the \textbf{for} loop, we mean the one that runs
from lines~\ref{ln:for_begin} to~\ref{ln:for_end}.
When the superscripts of the variables are removed (e.g., $S$ and $A$), we 
mean the variables at some moment just before or after an iteration of the \textbf{for} loop.

We first show that the following properties are loop invariants of the \textbf{for} loop.

\begin{lemma}[Feasibility Loop Invariant]
\label{lemma:feasibility}
The following properties are loop invariants of the \textbf{for}
loop:
\begin{compactitem}
\item[(a)] For every $i \in [d]$, $\sum_{v\in \Omega} \vs(v) \cdot \w_v(i) \leq \beta$, i.e., for every dimension, the total capacity consumed by the fractional solution $\vs$ is at most $\beta$.
\item[(b)] The set $S = \ceil{\vs} \subset \Omega$ is feasible for the original problem.
\end{compactitem}
\end{lemma}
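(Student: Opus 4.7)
The plan is to prove both loop invariants by induction on the iterations of the \textbf{for} loop. The base case is immediate: initially $\vs = \mathbf{0}$, so $\sum_v \vs(v) \w_v(i) = 0 \leq \beta$ and $S = \emptyset$ is trivially feasible. For the inductive step I assume (a) and (b) hold at the start of the iteration during which item $u$ arrives, and show they are preserved after the iteration ends. If at the end of the iteration the condition $\theta \geq \alpha$ fails, then neither $\vs$ nor $\va$ is modified and there is nothing to prove, so I may assume $\theta \geq \alpha$ is reached and both the continuous update and the disposal phase occur.

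The heart of the argument is to analyze, for each dimension $i \in [d]$, the quantity $g_i(\theta) := \sum_{v \in \Omega} \vx^\theta(v) \w_v(i)$ along the continuous process. I would split $[0, \theta_{\text{final}}]$ into finitely many sub-intervals on each of which the assignment $\theta \mapsto \{ u^\theta_j \}_{j \in [d]}$ is constant (the discrete events at the boundaries being a dimension becoming saturated/unsaturated or some coordinate of $\vx^\theta$ reaching $0$); on each sub-interval the rate $\frac{d g_i}{d \theta}$ is a fixed number. I would then argue two regimes. \emph{Unsaturated regime} ($g_i(\theta) < \beta$): here $g_i$ can only grow by $\w_u(i)$ per unit $\theta$, so it reaches $\beta$ continuously and at that moment the algorithm transitions to the saturated regime. \emph{Saturated regime} ($g_i(\theta) = \beta$): if $\w_u(i) = 0$, then $u$ contributes nothing on dimension $i$ while the decrements to other items contribute non-positively, so $\frac{d g_i}{d\theta} \leq 0$; if $\w_u(i) > 0$, then $u^\theta_i$ is defined and its rate of decrease is \emph{at least} $\frac{\w_u(i)}{\w_{u_i^\theta}(i)}$ (because the formula takes a $\max$ over all dimensions for which $v = u^\theta_j$), contributing at least $-\w_u(i)$ to $\frac{d g_i}{d \theta}$ through its weight on dimension $i$; together with the $+\w_u(i)$ contribution from $u$ itself and the non-positive contribution from any \emph{other} item being decreased, this again yields $\frac{d g_i}{d\theta} \leq 0$. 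Hence $g_i$ never exceeds $\beta$ throughout the continuous process.

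The subtle part that I expect to be the main obstacle is precisely this bookkeeping when a single item~$v$ simultaneously plays the role $u^\theta_j$ for several saturated dimensions~$j$: the $\max$ in the rate formula makes $v$ drop faster than ``needed'' for any individual dimension, and one must check that this over-decrement, combined with the contribution of other $u^\theta_{j'}$'s that happen to have non-zero weight on dimension $i$, still yields a non-positive rate on every saturated dimension. The inequality $\frac{d g_i}{d \theta} \leq \w_u(i) - \frac{\w_u(i)}{\w_{u^\theta_i}(i)} \cdot \w_{u^\theta_i}(i) + (\text{non-positive cross terms}) = 0$ captures the essence, once one carefully groups contributions by the items $v \in \{u^\theta_j\}_j$.

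After the continuous phase ends, the disposal step in line~\ref{line:remove_fraction} only zeroes out coordinates of $\vs$, which cannot increase any $g_i$, so invariant (a) is preserved. For (b), I observe that after the disposal step every surviving coordinate satisfies $\vs(v) \geq \beta$, so for every dimension $i \in [d]$,
\[
\sum_{v \in S} \w_v(i) \;\leq\; \frac{1}{\beta}\sum_{v \in S} \vs(v) \w_v(i) \;\leq\; \frac{1}{\beta}\sum_{v \in \Omega} \vs(v) \w_v(i) \;\leq\; \frac{\beta}{\beta} \;=\; 1,
\]
which gives $\sum_{v \in S} \w_v \leq \one$ and thus feasibility of $S = \ceil{\vs}$. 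This closes the induction and establishes both invariants.
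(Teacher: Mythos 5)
Your proof is correct and follows essentially the same approach as the paper: you reduce invariant~(a) to showing that $\frac{d g_i}{d\theta} \leq 0$ whenever $g_i(\theta) = \beta$ during the continuous process, and derive~(b) from~(a) using the threshold $\vs(v)\geq\beta$ guaranteed by line~\ref{line:remove_fraction}. You are slightly more explicit than the paper about the piecewise-constancy of the $\theta\mapsto u^\theta_j$ assignment and about the case where one item plays the role of $u^\theta_j$ for several saturated dimensions, but the core inequality (keeping only the $u$ and $u^\theta_i$ terms, dropping the other non-positive contributions) and the overall structure are identical.
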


\begin{proof}
Statement~(a) holds initially because $\vs$ is initialized to $\vec{0}$.  Next,
assume that for some item $u \in \Omega$, statement~(a) holds for $\vs^u$.
It suffices to analyze the non-trivial case when the changes to $\vs$ are committed
at the end of the iteration.  Hence, we show that statement~(a) holds
throughout the execution of the \textbf{do}...\textbf{while} loop 
starting at line~\ref{ln:do_begin}.  It is enough show that
for each $i \in [d]$, $g_i(\theta) := \sum_{v \in \Omega} \vx^\theta(v) \cdot \w_v(i) \leq \beta$ holds while $\theta$ is being increased.

To this end, it suffices to prove that
if $g_i(\theta) = \beta$, then $\frac{d g_i(\theta)}{d \theta} \leq 0$.
We only need to consider the case $\w_u(i) > 0$, because otherwise $g_i(\theta)$
cannot increase.
By the rules updating~$\vx$,
we have in this case
$\frac{d g_i(\theta)}{d \theta} \leq 
\frac{d \vx^\theta(u)}{d \theta} \w_u(i) + \frac{d \vx^\theta(u^\theta_i)}{d \theta} \w_{u^\theta_i}(i) 
\leq 0$, as required.

We next show that statement~(b) follows from statement~(a).
Line~\ref{line:remove_fraction} ensures that between iterations
of the \textbf{for} loop,
for all $v \in S = \ceil{\vs}$, $\vs(v) \geq \beta$.

Hence, for all $i \in [d]$,
we have $\sum_{v \in S} \w_v(i) \leq \frac{1}{\beta} \sum_{v \in S} \vs(v)\cdot \w_v(i)
= \frac{1}{\beta} \sum_{v \in \Omega} \vs(v)\cdot \w_v(i) \leq 1$,
where the last inequality follows from statement~(a).
\end{proof}

For a vector $\vx\in[0,1]^\Omega$, we define $\v(\vx) := \sum_{u\in \Omega} \v(u)\cdot \vx(u)$;
for a set $X\subset \Omega$, we define $\v(X) := \sum_{u\in X} \v(u)$.
Note that the definitions of $\v(\lceil \vx \rceil)$ are consistent under the set and the vector interpretations.

The following simple fact (which is similar to Lemma~2.1 of~\cite{soda/ChanHJKT17}) establishes the connection between the values of items (defined by our algorithm) and the utility of the solution (defined by the submodular function $f$).

\begin{fact}[Lemma 2.1 in \cite{soda/ChanHJKT17}]\label{fact:submo}
The \textbf{for} loop maintains
the invariants $f(A) = f(\emptyset) + \v(A)$ and $f(S) \geq f(\emptyset) + \v(S)$,
where $A = \ceil{\va}$ and $S = \ceil{\vs}$.
\end{fact}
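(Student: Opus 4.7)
The plan is to establish both invariants by induction on the iterations of the \textbf{for} loop, with the key observation being that the auxiliary vector $\va$ is monotonically non-decreasing coordinatewise. Inspecting Algorithm~\ref{alg:kcsparse}, the only line that modifies $\va$ is the assignment $\va(u) \leftarrow \vx^\theta(u)$ in the update phase, which touches just the single coordinate corresponding to the currently arriving item $u$. Moreover $\vx^\theta(u) = \vs^u(u) + \theta \geq \theta \geq \alpha > 0$ whenever the update fires, so once an item enters $A = \ceil{\va}$ it remains there forever. Consequently, for every $u \in \Omega$ the set $A^u$ is exactly the collection of items accepted strictly before $u$ arrived, and $\v(u) = f(u \mid A^u)$ by definition.

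For the first invariant $f(A) = f(\emptyset) + \v(A)$, I proceed by induction on the round index. Initially $A = \emptyset$ and both sides equal $f(\emptyset)$. In a round where $u$ is not accepted, neither side changes. If $u$ is accepted, $A$ goes from $A^u$ to $A^u \cup \{u\}$ and
\[
f(A^u \cup \{u\}) = f(A^u) + f(u \mid A^u) = \bigl(f(\emptyset) + \v(A^u)\bigr) + \v(u) = f(\emptyset) + \v(A^u \cup \{u\}),
\]
using the inductive hypothesis and the definition of $\v(u)$.

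For the second invariant $f(S) \geq f(\emptyset) + \v(S)$, the idea is to reduce to the submodular chain rule applied to an arrival-order enumeration. First I would verify the containment $S \subseteq A$: whenever a coordinate $\vs(v)$ becomes positive during the continuous update, either $v$ is the currently arriving item (which is simultaneously being added to $A$ if the update commits) or $v$ was already in $A$ from an earlier iteration; the cleanup at line~\ref{line:remove_fraction} can only shrink $\ceil{\vs}$. Now enumerate $S = \{v_1, \ldots, v_\ell\}$ in order of arrival and set $S_j := \{v_1, \ldots, v_j\}$. Since all of $v_1, \ldots, v_{j-1}$ were accepted before $v_j$ arrived, $S_{j-1} \subseteq A^{v_j}$, and monotone submodularity of $f$ yields $f(v_j \mid S_{j-1}) \geq f(v_j \mid A^{v_j}) = \v(v_j)$. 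Telescoping,
\[
f(S) = f(\emptyset) + \sum_{j=1}^\ell f(v_j \mid S_{j-1}) \geq f(\emptyset) + \sum_{j=1}^\ell \v(v_j) = f(\emptyset) + \v(S).
\]

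The only nontrivial obstacle is the containment $S \subseteq A$ together with the fact that $A^u$ captures exactly the arrival-time prefix of $A$ up to $u$; both follow from the observation that $\va$ grows by at most one coordinate per iteration and is never decreased. Once these structural properties are in hand, the first invariant is pure bookkeeping and the second requires a single invocation of submodularity, mirroring Lemma~2.1 of~\cite{soda/ChanHJKT17}.
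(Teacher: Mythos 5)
Your proof is correct. The paper itself does not prove this fact but defers to Lemma~2.1 of~\cite{soda/ChanHJKT17}, and your argument is exactly the standard one behind that citation: the monotone growth of $\va$ (one coordinate set per committed round) gives $f(A) = f(\emptyset) + \v(A)$ by induction via $\v(u) = f(u \mid A^u)$, and the containment $S \subseteq A$ together with the telescoping/submodularity bound $f(v_j \mid S_{j-1}) \geq f(v_j \mid A^{v_j}) = \v(v_j)$ yields $f(S) \geq f(\emptyset) + \v(S)$.
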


Our analysis consists of two parts.
We first show that $\v(\va)$ is comparable to the value of our real solution $S$ in Lemma~\ref{lemma:S_to_A}.
Then, we compare in Lemma~\ref{lemma:OPT_to_A}  the value of an (offline) optimal solution with $\v(\va)$.
Combining the two lemmas we are able to prove Theorem~\ref{th:k_competitive}.

\begin{lemma}
\label{lemma:S_to_A}
The \textbf{for} loop maintains the invariant:
	$(1-\frac{\beta}{\alpha})\cdot \v(S) \geq (1-\frac{\beta}{\alpha}-\gamma)\cdot 
	\v(\va)$, where $S = \ceil{\vs}$.
	In particular, our choice of the parameters implies that
$\v(\va) \leq 2 \cdot \v(S)$.
\end{lemma}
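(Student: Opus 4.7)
The plan is a global value-accounting argument. At any snapshot between iterations, I establish the identity
\[
\v(\va) - \v(\vs) \;=\; \sum_u L(u) + \sum_u L_T(u),
\]
where for each committed iteration $u$ (i.e.\ $\theta_u \geq \alpha$), $L(u) := \sum_{v \neq u}(\vx^0(v) - \vx^{\theta_u}(v))\,\v(v)$ is the value shaved off during the \textbf{do}...\textbf{while} loop and $L_T(u) := \sum_{v \in T_u} \vx^{\theta_u}(v)\,\v(v)$ is the value dropped during the rounding step (items with $\vs(v) < \beta$). This is immediate: at a committed iteration, $\v(\va)$ grows by $\theta_u \v(u)$ and $\v(\vs)$ grows by $\theta_u \v(u) - L(u) - L_T(u)$, while at non-committed iterations nothing changes. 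The plan is to bound the two kinds of losses, combine them, and absorb the awkward term using $\v(\vs)$ itself.

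For the do-while loss, a direct computation in a committed iteration $u$ gives
\[
\frac{d \v(\vx^\theta)}{d \theta} \;\geq\; \v(u) - \sum_i \w_u(i)\,\rho_i(u_i^\theta) \;>\; (1-\gamma)\,\v(u),
\]
where the first inequality uses $\max \leq \sum$ over the possibly several dimensions $i$ on which a given $v$ serves as $u_i^\theta$, and the second is the loop's exit condition. Integrating over $\theta \in [0, \theta_u]$ yields $L(u) \leq \gamma \theta_u \v(u) = \gamma\,\va(u)\,\v(u)$, so $\sum_u L(u) \leq \gamma\,\v(\va)$.

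For the rounding loss, note that once $\vs(v) = 0$ it stays $0$ (coordinates for $v \neq u$ only decrease), so each item is rounded out at most once, and the rounded-out items are exactly $A \setminus S$. At the rounding moment $\vs(v) < \beta$, so the contribution is $< \beta\,\v(v) \leq \tfrac{\beta}{\alpha}\,\va(v)\,\v(v)$ since $\va(v) \geq \alpha$ on $A$. Hence $\sum_u L_T(u) \leq \tfrac{\beta}{\alpha}\,\v(\va|_{A \setminus S})$. Plugging both bounds into the conservation identity and writing $\v(\va|_{A \setminus S}) = \v(\va) - \v(\va|_S)$,
\[
\v(\vs) \;\geq\; \bigl(1 - \gamma - \tfrac{\beta}{\alpha}\bigr)\v(\va) + \tfrac{\beta}{\alpha}\,\v(\va|_S).
\]
Since $\vs(v) \leq \va(v)$ for all $v$ and $\vs(v) = 0$ off $S$, I have $\v(\va|_S) \geq \v(\vs)$. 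Substituting gives $(1 - \tfrac{\beta}{\alpha})\v(\vs) \geq (1 - \gamma - \tfrac{\beta}{\alpha})\v(\va)$, and $\v(S) \geq \v(\vs)$ (from $\vs(v) \leq 1$) then yields the invariant. The ``in particular'' claim $\v(\va) \leq 2\v(S)$ drops out by plugging in $\alpha = \sqrt{\beta}$ and $\gamma = \tfrac{1}{2}(1 - \sqrt{\beta})$, so that both coefficients equal $(1-\sqrt{\beta})$ and $\tfrac{1}{2}(1-\sqrt{\beta})$ respectively.

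The main obstacle I expect is the rounding loss. If one naively upper-bounds $\v(\va|_{A \setminus S})$ by $\v(\va)$, the factor $\tfrac{\beta}{\alpha}$ is essentially $1$ when $\beta$ is close to $1$ and swamps the invariant. The saving trick is to retain $\v(\va|_S)$ as an explicit positive term on the right-hand side and then feed it back into the left via $\v(\va|_S) \geq \v(\vs)$; this is exactly what generates the coefficient $(1 - \tfrac{\beta}{\alpha})$ multiplying $\v(\vs)$ (and hence $\v(S)$) in the final invariant.
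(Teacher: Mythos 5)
Your proof is correct and is essentially the same argument as the paper's: the global accounting identity $\v(\va)-\v(\vs)=\sum_u L(u)+\sum_u L_T(u)$, combined with your bounds $\sum_u L(u)\le\gamma\,\v(\va)$ and $\sum_u L_T(u)\le\frac{\beta}{\alpha}\,\v(\va|_{A\setminus S})$, yields precisely the paper's stronger loop invariant $\v(\vs)\ge(1-\gamma-\frac{\beta}{\alpha})\sum_{r\in A\setminus S}\v(r)\va(r)+(1-\gamma)\sum_{r\in S}\v(r)\va(r)$ (the paper proves this by induction on iterations rather than by telescoping, but the per-iteration estimates are identical). The final algebraic step, retaining the $\v(\va|_S)$ term and feeding it back via $\va(r)\ge\vs(r)$ and $\v(S)\ge\v(\vs)$, is also the same.
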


\begin{proof}
We prove the stronger loop invariant that:
\begin{equation*}
\v(\vs) \geq (1 - \gamma - \frac{\beta}{\alpha}) \sum_{r \in A \setminus S} \v(r) \cdot \va(r)
+ (1 - \gamma) \sum_{r \in S} \v(r) \cdot \va(r),
\end{equation*}
where $S = \ceil{\vs}$ is the current feasible set and $A \setminus S$
is the set of items that have been accepted at some moment but are already discarded.

The invariant holds trivially initially when $S = A = \emptyset$
and $\vs = \vec{0}$.  Suppose the invariant
holds at the beginning of the iteration when item~$u \in \Omega$ arrives.
We analyze the non-trivial case when the item~$u$ is accepted into $S$,
i.e., $\vs$ and $\va$ are updated at the
end of the iteration.  Recall that $\vs^u$ and $\va^u$ refer to the 
variables at the beginning of the iteration,
and for the rest of the proof, we use the $\widehat{\vs}$ and $\widehat{\va}$ to denote
their states at the end of the iteration.

Suppose in the \textbf{do}...\textbf{while} loop,
the parameter $\theta$ is increased from $0$ to $\va(u) \geq \alpha$.
Since for all $r \neq u$, $\va^u(r) = \widehat{\va}(r)$,
we can denote this common value by $\va(r)$ without risk of ambiguity.
We use $\vx^u$ to denote the vector $\vx^\theta$ when $\theta = \va(u)$.
Then, we have
\begin{align*}
	\v(\vx^u) - \v(\vs^u)
	& \geq \v(u)\cdot \va(u) - \int_{0}^{\va(u)} \sum_{i\in[d]: u^\theta_i \neq \bot} (\frac{\w_u(i)}{\w_{u^\theta_i}(i)}\cdot \v(u^\theta_i)) d\theta \\
	& > \v(u)\cdot \va(u) - \int_{0}^{\va(u)} \gamma\cdot \v(u) d\theta \\
	& =  (1-\gamma)\cdot \v(u)\cdot \va(u),
\end{align*}
where the second inequality holds by the criteria of the \textbf{do}...\textbf{while} loop.

Next, we consider the change in value $\v(\widehat{\vs}) - \v(\vx^u)$,
because some (fractional) items are disposed of in line~\ref{line:remove_fraction}.
Let $D \subseteq S^u$ be such discarded items.
Since an item is discarded only if its fraction is less than $\beta$,
the value lost is at most $\beta \sum_{r \in D} \v(r) \leq \frac{\beta}{\alpha} \sum_{r \in D} \v(r) \cdot \va(r)$, where the last inequality follows
because $\va(r) \geq \alpha$ for all items~$r$ that are ever accepted.
Therefore, we have
$$\v(\widehat{\vs}) - \v(\vx^u)	 \geq  -\frac{\beta}{\alpha} \sum_{r \in D} \v(r) \cdot \va(r).
$$

Combining the above two inequalities, we have
	$$\v(\widehat{\vs}) - \v(\vs^u)
	\geq (1-\gamma)\cdot \v(u)\cdot \va(u) -\frac{\beta}{\alpha} \sum_{r \in D} \v(r) \cdot \va(r).$$
	
Hence, using the induction hypothesis that the loop invariant holds
at the beginning of the iteration,
it follows that 

\begin{align*}
\v(\widehat{\vs}) \geq & 
(1 - \gamma - \frac{\beta}{\alpha}) \sum_{r \in A^u \setminus S^u} \v(r) \cdot \va(r) 
+ (1 - \gamma) \sum_{r \in S^u} \v(r) \cdot \va(r)+ (1-\gamma)\cdot \v(u)\cdot \va(u) \\
& -\frac{\beta}{\alpha} \sum_{r \in D} \v(r) \cdot \va(r) \\
\geq & (1 - \gamma - \frac{\beta}{\alpha}) \sum_{r \in \widehat{A} \setminus \widehat{S}} \v(r) \cdot \va(r)
+ (1 - \gamma) \sum_{r \in \widehat{S}} \v(r) \cdot \va(r),
\end{align*}
where $\widehat{A} = \ceil{\widehat{\va}}$
and $\widehat{S} = \ceil{\widehat{\vs}}$, as required.

We next show that the stronger invariant implies the result of the lemma.
Rewriting the invariant gives
\begin{equation*}
\v(\vs) \geq (1 - \gamma - \frac{\beta}{\alpha}) \sum_{r \in A } \v(r) \cdot \va(r)
+ \frac{\beta}{\alpha} \sum_{r \in S} \v(r) \cdot \va(r)
\geq (1 - \gamma - \frac{\beta}{\alpha}) \sum_{r \in A } \v(r) \cdot \va(r)
+ \frac{\beta}{\alpha} \cdot \v(\vs),
\end{equation*}
where the last inequality follows because $\va(r) \geq \vs(r)$ for all $r \in S$.
Finally, the lemma follows because $\v(S) = \v(\ceil{\vs}) \geq \v(\vs)$.
\end{proof}

The following lemma gives an upper bound on the value
of the items in a feasible set that are discarded right away by the algorithm.

\begin{lemma}
\label{lemma:OPT_to_A}
The \textbf{for} loop maintains the invariant that
if $\opt$ is a feasible subset of items that have arrived so far,
then
	$\gamma \cdot \v(\opt\setminus A) \leq \frac{k}{\beta(1-\alpha)}\cdot \v(\va)$,
	where $A = \ceil{\va}$.
In particular, our choice of the parameters implies that
$\v(\opt\setminus A) \leq O(\frac{k}{\epsilon^2}) \cdot \v(S)$.
\end{lemma}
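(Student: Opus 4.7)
The plan is to start from the \textbf{do}...\textbf{while} exit condition for each $u \in \opt \setminus A$ and aggregate by swapping the order of summation over items and dimensions. For such a $u$, the loop must terminate at some $\tau^*(u) \in [0,\alpha)$ (otherwise $u$ would be accepted), and at this value the loop-continuation test fails, giving
\[\gamma\,\v(u) \;\leq\; \sum_{i \in I(u)} \w_u(i)\cdot\rho_i(u_i^{\tau^*(u)}),\]
where $I(u) := \{ i : u_i^{\tau^*(u)} \neq \bot \}$ consists of the dimensions that are saturated at state $\vx^{\tau^*(u)}$; note $|I(u)| \leq k$ since $I(u)$ is contained in the support of $\w_u$. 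For each $i \in I(u)$ we have $\sum_v \vx^{\tau^*(u)}(v)\w_v(i) = \beta$. Since $u_i^{\tau^*(u)}$ is the density-minimizer among $v \neq u$ with $\vx^{\tau^*(u)}(v)\w_v(i) > 0$, I will bound the minimum by the $\vx^{\tau^*(u)}(v)\w_v(i)$-weighted average of $\rho_i(v)$, which simplifies via $\w_v(i)\rho_i(v) = \v(v)$ to
\[\rho_i(u_i^{\tau^*(u)}) \;\leq\; \frac{\sum_{v \neq u,\, \w_v(i)>0} \vx^{\tau^*(u)}(v)\,\v(v)}{\beta - \tau^*(u)\,\w_u(i)}.\]

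The denominator is at least $\beta - \alpha(1-\epsilon) = \beta(1-\alpha)$, using $\tau^*(u) < \alpha$ and $\w_u(i) \leq 1-\epsilon = \beta$. The numerator is at most $V_i := \sum_{v\,:\,\w_v(i)>0} \va(v)\,\v(v)$, since $\vx^{\tau^*(u)}(v) \leq \va^u(v) \leq \va(v)$ for $v \neq u$ while $\va(u) = 0$ for $u \in \opt\setminus A$. Substituting and summing over $u \in \opt \setminus A$ (and extending the inner sum to all $i \in [d]$, which only adds non-negative terms),
\[\gamma\cdot\v(\opt\setminus A) \;\leq\; \frac{1}{\beta(1-\alpha)}\sum_i V_i \sum_{u \in \opt \setminus A}\w_u(i) \;\leq\; \frac{1}{\beta(1-\alpha)}\sum_i V_i,\]
where I invoked the feasibility of $\opt$, i.e., $\sum_{u \in \opt}\w_u(i) \leq 1$ for every $i$. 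Finally, the column-sparsity bound $|\{ i : \w_v(i)>0 \}| \leq k$ gives $\sum_i V_i = \sum_v \va(v)\v(v)\cdot|\{ i : \w_v(i)>0 \}| \leq k\,\v(\va)$, establishing the main inequality. For the ``in particular'' claim, plugging in $\gamma,\,1-\alpha = \Theta(\epsilon)$ and $\beta = \Theta(1)$ yields $\v(\opt\setminus A) \leq O(k/\epsilon^2)\,\v(\va)$, and Lemma~\ref{lemma:S_to_A} replaces $\v(\va)$ by $2\v(S)$.

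The crux I expect to work through carefully is pinning down the right weighted-average upper bound on the minimum density $\rho_i(u_i^{\tau^*(u)})$: weighting by $\vx^{\tau^*(u)}(v)\w_v(i)$ (rather than by $\va(v)$ directly) is what simultaneously produces the slack factor $\beta(1-\alpha)$ in the denominator, via the saturation identity $\sum_v \vx^{\tau^*(u)}(v)\w_v(i) = \beta$ combined with $\tau^*(u)<\alpha$ and the $\epsilon$-slack on $\w_u(i)$, and confines the numerator to items having positive mass on dimension $i$. This last restriction is precisely what lets the final accounting $\sum_i V_i \leq k\,\v(\va)$ pay only a factor $k$ (the sparsity) rather than $d$ (the ambient dimension) after swapping the order of summation.
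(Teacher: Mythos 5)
Your proposal is correct and follows essentially the same approach as the paper: invoke the failed \textbf{do}\dots\textbf{while} exit condition for each $u \in \opt\setminus A$, bound the minimum density $\rho_i(u_i^{\theta})$ by the $\vx^\theta(v)\w_v(i)$-weighted average over items occupying dimension $i$, use the saturation identity together with $\theta\cdot\w_u(i) \leq \alpha\beta$ to get the $\beta(1-\alpha)$ denominator, and then swap sums and apply feasibility of $\opt$ and the sparsity bound to obtain $\sum_i V_i \leq k\cdot\v(\va)$. If anything, your justification that the numerator is at most $V_i$ (via $\vx^{\tau^*(u)}(v) \leq \va^u(v) \leq \va(v)$ and $\va(u)=0$) is spelled out a bit more explicitly than the paper's terse ``very loose upper bound'' remark, but the argument is the same.
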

\begin{proof}
	Consider some $u\in \opt\setminus A$. Since $u\notin A$, 
	in iteration~$u$ of the \textbf{for} loop,
	we know that at the end of the \textbf{do}...\textbf{while} loop,
	 we must have $\theta < \alpha$, which implies $\gamma \cdot \v(u) \leq \sum_{i\in [d]} \w_u(i)\cdot \rho_i(u_i^\theta)$ at this moment.
	
	Recall that by definition, 
	$\rho_i(u^\theta_i)$ is either (i)~0 in the case $\sum_{v \in \Omega} \vx^\theta(v) \cdot \w_v(i) < \beta$ and $\w_u(i) > 0$, or (ii)~the minimum density $\rho_i(v)$ in dimension~$i$ among items $v \neq u$ such that
	$\vx^\theta(v)\cdot\w_v(i) > 0$.
	
	Hence, in the second case, we have
	\begin{align*}
	\rho_i(u^\theta_i) & \leq 
	\frac{\sum_{v \neq u: \vx^\theta(v)\w_v(i) > 0} \vx^\theta(v) \cdot \v(v)  }{\sum_{v \neq u: \vx^\theta(v)\w_v(i) > 0} \vx^\theta(v) \cdot \w_v(i)} =
	\frac{\sum_{v \neq u: \vx^\theta(v)\w_v(i) > 0} \vx^\theta(v) \cdot \v(v)  }{\beta - \theta \cdot \w_u(i)} \\
	& \leq \frac{\sum_{v: \w_v(i) > 0} \va(v) \cdot \v(v)}{\beta (1 - \alpha)} = \frac{V_i}{\beta (1 - \alpha)},
	\end{align*}
	where $V_i := \sum_{v: \w_v(i) > 0} \va(v) \cdot \v(v)$ depends only on the current $\va$ and $i \in [d]$.
	In the last inequality, we use $\theta \cdot \w_u(i) \leq \alpha \beta$ and a
	very loose upper bound on the numerator.  Observe that
	for the case (i) $\rho_i(u^\theta_i) = 0$, the inequality $\rho_i(u^\theta_i) \leq \frac{V_i}{\beta (1 - \alpha)}$ holds trivially.

Hence, using this uniform upper bound on $\rho_i(u^\theta_i)$,
we have $\gamma \cdot \v(u) \leq \sum_{i \in [d]} \w_u(i) \cdot \frac{V_i}{\beta (1 - \alpha)}$.
	
Therefore, we have
\begin{align*}
\gamma \cdot \v(\opt \setminus A) & \leq \sum_{u \in \opt \setminus A} \sum_{i \in [d]} \w_u(i) \cdot \frac{V_i}{\beta (1 - \alpha)}
=  \sum_{i \in [d]}  \left( \sum_{u \in \opt \setminus A} \w_u(i) \right) \cdot \frac{V_i}{\beta (1 - \alpha)} \\
& \leq \sum_{i \in [d]} \frac{V_i}{\beta (1 - \alpha)} \leq \frac{k \cdot \v(\va)}{\beta(1 - \alpha),}
\end{align*}
where the second to last inequality follows because $\opt \setminus A$ is feasible,
and $\sum_{i \in [d]} V_i \leq k \cdot \v(\va)$, because
for each $v \in \Omega$, $|\{i \in [d]: \w_v(i) > 0 \}| \leq k$.
\end{proof}

\begin{theorem}\label{th:k_competitive}
Algorithm~\ref{alg:kcsparse} is $O(\frac{k}{\epsilon^2})$-competitive.
\end{theorem}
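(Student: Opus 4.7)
The plan is to chain together the two preceding lemmas (Lemma~\ref{lemma:S_to_A} and Lemma~\ref{lemma:OPT_to_A}) via Fact~\ref{fact:submo}, reducing the comparison between $f(\opt)$ and $f(S)$ to a comparison between the linearized quantities $\v(\opt\setminus A)$, $\v(\va)$, and $\v(S)$, where $A = \ceil{\va}$ and $S = \ceil{\vs}$ are the ``ever accepted'' and ``currently accepted'' sets at the end of the stream.

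First, I would reduce to the linear quantities using submodularity. By monotonicity, $f(\opt) \leq f(\opt\cup A)$. Writing $\opt\setminus A = \{u_1, u_2, \dots, u_m\}$ in the order they arrive and telescoping,
\begin{equation*}
f(\opt\cup A) = f(A) + \sum_{j=1}^{m} f\bigl(u_j \mid A \cup \{u_1,\dots,u_{j-1}\}\bigr).
\end{equation*}
Since $A^{u_j} \subseteq A \cup \{u_1,\dots,u_{j-1}\}$ (because $A^{u_j}$ consists only of items that ever got accepted before $u_j$ arrived, all of which lie in $A$), submodularity yields $f(u_j \mid A \cup \{u_1,\dots,u_{j-1}\}) \leq f(u_j \mid A^{u_j}) = \v(u_j)$. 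Summing gives the key inequality $f(\opt) \leq f(A) + \v(\opt\setminus A)$.

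Next, I would dispatch the two summands separately. By Fact~\ref{fact:submo}, $f(A) = f(\emptyset) + \v(A)$. Since every $u \in A$ has $\va(u) \geq \alpha$ (the algorithm only commits an update when $\theta \geq \alpha$, at which point $\vx^\theta(u) = \theta$), we get $\v(\va) \geq \alpha \cdot \v(A)$, hence $\v(A) \leq \v(\va)/\alpha \leq (2/\alpha)\cdot \v(S)$ by Lemma~\ref{lemma:S_to_A}. For the second summand, Lemma~\ref{lemma:OPT_to_A} directly gives $\v(\opt\setminus A) \leq O(k/\epsilon^2)\cdot \v(S)$. Combining with $f(S) \geq f(\emptyset) + \v(S)$ (again Fact~\ref{fact:submo}) and monotonicity ($f(\emptyset) \leq f(S)$), the bound $f(\opt) \leq f(\emptyset) + O(k/\epsilon^2)\cdot \v(S) \leq O(k/\epsilon^2)\cdot f(S)$ follows.

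There is no real obstacle here beyond being careful about two bookkeeping points: (i)~that the telescoping order for $\opt\setminus A$ is chosen to match the arrival order so that $A^{u_j}$ is a subset of the conditioning set, which is what enables the use of submodularity to replace marginals by $\v(u_j)$; and (ii)~that $\v(A)$ (a sum over accepted items) and $\v(\va)$ (a sum weighted by the acceptance fractions) are distinct objects, related by the factor $\alpha$ coming from the threshold in the commit step. With these two points handled, the theorem is an immediate consequence of the two lemmas.
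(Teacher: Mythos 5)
Your proof is correct and follows essentially the same route as the paper: reduce to the linear quantities via submodularity and monotonicity, obtaining $f(\opt) \leq f(A) + \v(\opt\setminus A)$, then bound $\v(A)$ by $\v(\va)/\alpha$ using the acceptance threshold, invoke Lemma~\ref{lemma:OPT_to_A} for the discarded part, and close via Lemma~\ref{lemma:S_to_A} and Fact~\ref{fact:submo}. The only cosmetic difference is that the paper applies submodularity in two separate steps ($f(\opt\cup A) \leq f(A) + \sum_{u\in\opt\setminus A} f(u|A)$ and then $f(u|A) \leq f(u|A^u)$), whereas you fold both into a single telescoping over the arrival order; both are valid and yield the same bound.
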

\begin{proof}
Suppose $\opt$ is a feasible subset.  Recall that $S$ is the feasible subset currently maintained by the algorithm.
Then, by the monotonicity and the submodularity of $f$, we have
$f(\opt) \leq	f(\opt \cup A) \leq f(A) + \sum_{u \in \opt\setminus A} f(u|A) \leq  f(\emptyset) + \v(A) + \v(\opt\setminus A)$,
where we use Fact~\ref{fact:submo}
and submodularity $f(u|A) \leq f(u|A^u) = \v(u)$
in the last inequality.

Next, observe that for all $u \in A$, $\va(u) \geq \alpha$.
Hence, we have $\v(A) \leq \frac{\v(\va)}{\alpha} = O(1) \cdot \v(\va)$.
Combining with Lemma~\ref{lemma:OPT_to_A},
we have $f(\opt) \leq f(\emptyset) + O(\frac{k}{\epsilon^2}) \cdot \v(\va)$.

Finally, using Lemma~\ref{lemma:S_to_A} and Fact~\ref{fact:submo}
gives $f(\opt) \leq O(\frac{k}{\epsilon^2}) \cdot f(S)$, as required.
\end{proof}

\subsection{Hardness Results: Proof of Theorem~\ref{th:hardness_slack}}

We show that for the \ovpp with slack $\epsilon\in (0,\frac{1}{2})$,
no deterministic algorithm can achieve $o(k)$-competitive ratio, and
no randomized algorithm can achieve $o(\frac{k}{\log k})$-competitive ratio.
To prove the hardness result for randomized algorithms, we apply Yao's principle~\cite{focs/Yao77} and
construct a distribution of hard instances,
such that any deterministic algorithm cannot perform well in expectation.
Specifically, we shall show that each instance in the support of the distribution
has offline optimal value $\Theta(\frac{k}{\log k})$, but any deterministic algorithm
has expected objective value $O(1)$, thereby proving Theorem~\ref{th:hardness_slack}.

In our hard instances, the utility function is linear, and all items have the same value, i.e., the utility function is $f(S) := |S|$.
Moreover, we assume all weight vectors are in $\{0,1-\epsilon\}^d$, for any arbitrary $\epsilon\in(0,\frac{1}{2})$.
Hence, we only need to describe the arrival order of items, and the non-zero dimensions of weight vectors.  In particular, we can associate each item~$u$
with a $k$-subset of $[d]$.  We use ${[d] \choose k}$ to denote the collection
of $k$-subsets of $[d]$.

\paragraph{Notations.}
We say that two items are \emph{conflicting},
if they both have non-zero weights on some dimension~$i$
(in which case, we say that they \emph{conflict} with each other on dimension $i$).
We call two items \emph{non-conflicting} if they do not conflict with each other on any dimension.

Our hard instances show that in some case when items conflict with one another on different dimensions, the algorithm might be forced to make difficult decisions on choosing which item to accept.
By utilizing the nature of unknown future, we show that it is very unlikely for any algorithm to make the right decisions on the hard instances.
Although accepted  items can be later disposed of to make room for (better) items, by carefully setting the weights and arrival order, we show that disposing of accepted  items cannot help to get a better objective (hence in a sense, disabling free-disposal). 

\paragraph{Hard instance for deterministic algorithms.}
Let $d := 2 k^2$.
Recall that each item is specified by an element of  ${[d] \choose k}$, indicating
which $k$ dimensions are non-zero.
Consider any deterministic algorithm. An arriving sequence of length at most $2k$ is chosen adaptively.
The first item is picked arbitrarily, and the algorithm must select this item, or else the sequence stops immediately.  Subsequently, in each round, the non-zero dimensions for the next arriving item~$u$ are picked according to the following rules.
\begin{compactitem}
\item  Exactly $k-1$ dimensions from $[d]$ are chosen such that no previous item has
picked them.
\item Suppose $\widehat{u} \in {[d] \choose k}$ is the item currently kept by the algorithm.  Then, the remaining dimension~$i$ is picked from $\widehat{u}$ such that no other arrived item
conflicts with $\widehat{u}$ on dimension~$i$.  If no such dimension~$i$ can be picked,
then the sequence stops.
\end{compactitem}

\begin{lemma}
Any deterministic algorithm can keep at most 1 item, while
there exist at least $k$ items that are mutually non-conflicting, implying
that an offline optimal solution contains at least $k$ items.
\end{lemma}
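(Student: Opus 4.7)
The plan is to establish the two halves of the lemma separately. First, I would show that at every moment the algorithm's feasible set contains at most one item. Since $\epsilon<\tfrac12$, two items that share any nonzero coordinate have combined weight $2(1-\epsilon)>1$ there, so they cannot coexist in a feasible set. Initially only $u_1$ is held; by construction every subsequent item $u_l$ shares a coordinate with $\widehat{u}_{l-1}$, the unique currently-kept item, so accepting $u_l$ forces the disposal of $\widehat{u}_{l-1}$, preserving the invariant that at most one item is kept.

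For the second half, I would convert the parent relation $u_l\mapsto\widehat{u}_{l-1}$ (for $l\ge 2$) into a rooted tree $T$ on the arrived items and establish the key combinatorial identity: two items conflict if and only if they are adjacent in $T$. The ``adjacent implies conflicting'' direction is immediate from the construction. For the converse I would track, for each coordinate $i\in[d]$, the list of items containing $i$: the coordinate $i$ appears first either as one of the coordinates of $u_1$ or as a fresh coordinate of some later item, and thereafter is ``private'' until possibly chosen as a future item's conflict coordinate (an event that makes that new item a child in $T$ of the item holding $i$). The adversary's rule forbidding reuse of any coordinate on which some already-arrived item conflicts with $\widehat{u}$ then prevents $i$ from ever being chosen again, so $i$ is contained in at most two items, which are precisely the endpoints of the corresponding edge of $T$.

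With the conflict graph identified with $T$, I would exploit the fact that disposed items cannot be recovered, so the sequence of distinct $\widehat{u}$ values forms a single root-to-leaf path $w_0=u_1, w_1, \dots, w_m$ in $T$ (the ``spine''), and every other item is a leaf attached to some $w_i$; let $c_i$ denote the number of such non-spine leaves at $w_i$. A short caterpillar-tree dynamic program with $g(i,\mathrm{in})=g(i-1,\mathrm{out})+1$ and $g(i,\mathrm{out})=c_i+\max(g(i-1,\mathrm{in}),g(i-1,\mathrm{out}))$ yields by induction $g(i,\mathrm{in})+g(i,\mathrm{out})\ge |\{\text{items in the sub-caterpillar up to } w_i\}|$, so the maximum independent set of $T$ has size at least $t/2$ after $t$ items. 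The adversary keeps sending items until either the spine tip saturates (all $k$ coordinates of $w_m$ occupied by tree edges) or the length reaches $2k$; the fresh-coordinate accounting $k+(2k-1)(k-1)\le 2k^2=d$ shows the latter is supportable by the dimension budget. In the length-$2k$ case, the DP bound already supplies $k$ pairwise non-conflicting items. In the saturation case, either $m=0$ and the $k$ leaf children of $u_1$ form the desired independent set, or $m\ge 1$ and the $k-1$ leaf children of $w_m$ together with $w_{m-1}$ (which is not adjacent in $T$ to any of them) do.

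The main obstacle I anticipate is the careful bookkeeping in the second paragraph that verifies conflicts in the original knapsack sense coincide with edges of $T$; once that equivalence is established, both halves of the lemma reduce to short observations on the tree.
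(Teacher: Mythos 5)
Your proof is correct and takes essentially the same route as the paper: the first half is identical, and your ``conflicts coincide with tree edges'' observation is exactly the paper's facts that each arriving item conflicts only with the currently kept item and each dimension appears in at most two items, after which the two stopping cases are handled just as in the paper (an independent set of size at least half of the $2k$ items, resp.\ the $k$ items conflicting with the saturated kept item, which are pairwise non-conflicting neighbors in the tree). The only differences are cosmetic: you extract the large independent set via a caterpillar DP where the paper uses a backward greedy scan, and you add the (correct, implicitly assumed) bookkeeping that $k+(2k-1)(k-1)\le 2k^2=d$ so the adversary never runs out of fresh dimensions.
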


\begin{proof} By adversarial choice, every arriving item conflicts
with the item currently kept by the algorithm.  Hence, the algorithm
can keep at most 1 item at any time.

We next show that when the sequence stops, there exist at least
$k$~items in the sequence that are mutually non-conflicting.  For the case when there are $2k$ items in the sequence, consider the items in reversed order of arrival.  Observe that each item conflicts with only one item that arrives before it. Hence, we can scan the items one by one backwards, and
while processing a remaining item, we remove any earlier item that conflicts with it.  After we finish with the scan, there are at least $k$ items remaining that are mutually non-conflicting.

Suppose the sequence stops with less than $2k$ items.  It must be the case
that while we are trying to add a new item~$u$, we cannot
find a dimension~$i$ contained in the item~$\widehat{u}$ currently kept by the algorithm such that no already arrived item conflicts with~$\widehat{u}$
on dimension~$i$.  This implies that for every non-zero dimension~$i$ of~$\widehat{u}$, there is already an item $u_i$ conflicting with~$\widehat{u}$
on that dimension.  Since by choice, each dimension can cause a conflict
between at most 2 items, these $k$ items $u_i$'s must be mutually non-conflicting.
\end{proof}

\paragraph{Distribution of Hard Instances.}
To use Yao's principle~\cite{focs/Yao77}, we give a procedure to
sample a random sequence of items.  For some large enough integer~$\ell$
that is a power of~$2$,
define $k:= 100 \ell \log_2 \ell + 1$, which is the sparsity of the weight vectors.
Observe that $\ell = \Theta(\frac{k}{\log k})$,
and define $d := \ell + 400 \ell^2 \log_2 \ell = O(\frac{k^2}{\log k})$
to be the number of dimensions. 
We express the set of dimensions $[d] = I \cup J$
as the disjoint union of $I := [\ell]$ and $J := [d] \setminus I$.
The items arrive in~$\ell$ phases, and for each~$i \in [\ell]$,
$4\ell - i + 1$~items arrive. Recall that each item is characterized by its ~$k$ non-zero dimensions (where the non-zero coordinates
all equal $1 - \epsilon > \frac{1}{2}$).  We initialize $J_1 := J$.  For $i$ from $1$ to $\ell$, we describe how the items in phase~$i$ are sampled as follows.
\begin{compactitem}
\item[1.] Each of the~$4\ell-i+1$ items will have $i \in I = [\ell]$ as the only non-zero dimension in $I$.
\item[2.] Observe that (inductively) we have $|J_i| = (4\ell - i + 1) \cdot 100 \ell \log_2 \ell$.  We partition $J_i$ randomly into $4\ell-i+1$ disjoint subsets,
each of size exactly $k-1= 100 \ell \log_2 \ell$.  Each such subset corresponds
to the remaining $(k-1)$ non-zero dimensions of an item in phase~$i$.
These items in phase~$i$ can be revealed to the algorithm one by one.

\item[3.] Pick $S_i$ from those $4\ell - i+ 1$ subsets uniformly at random;
define $J_{i+1} := J_i \setminus S_i$.  Observe that the algorithm
does not know $S_i$ until the next phase~$i+1$ begins.
\end{compactitem}

\begin{claim} \label{claim:opt_k}
In the above procedure, the items corresponding to $S_i$'s for $i \in [\ell]$
are mutually non-conflicting.  This implies that there is an offline
optimal solution containing $\ell = \Theta(\frac{k}{\log k})$ items.
We say that those $\ell$ items are \emph{good}, while other items are \emph{bad}.
\end{claim}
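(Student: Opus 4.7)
The plan is to unfold the construction and verify that the chosen items $S_1, S_2, \ldots, S_\ell$ (viewed as items, each characterized by its support of non-zero dimensions) have pairwise disjoint supports, so they form a feasible offline solution. The argument splits the analysis across the two coordinate blocks $I = [\ell]$ and $J = [d] \setminus I$.

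First I would record an easy invariant on the $J$-side: by construction $J_{i+1} = J_i \setminus S_i$, so inductively the sets $J_1 \supseteq J_2 \supseteq \cdots \supseteq J_\ell$ are nested, and moreover for any $j > i$ we have $S_j \subseteq J_j \subseteq J_{i+1} = J_i \setminus S_i$, so $S_i \cap S_j = \emptyset$. Thus the $J$-supports of the selected items are pairwise disjoint. On the $I$-side, the item selected in phase $i$ uses exactly the single coordinate $i \in I$, so the $I$-supports $\{i\}$ and $\{j\}$ for $i \neq j$ are trivially disjoint. Combining the two blocks, the full weight-vector supports of the $\ell$ chosen items are pairwise disjoint, which is exactly the definition of being mutually non-conflicting.

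Next I would cash this in to get a feasible offline solution of size $\ell$. Since the supports are disjoint, for every dimension $t \in [d]$ at most one of the $\ell$ items contributes to coordinate $t$, and the contribution is at most $1-\epsilon < 1$. Hence $\sum_{i \in [\ell]} \w_{u_i} \le \mathbf{1}$ componentwise, so the set of these $\ell$ items is feasible. Because $f(S) = |S|$, this witnesses $\opt \geq \ell$. Finally, plugging in $k = 100 \ell \log_2 \ell + 1$ gives $\ell = \Theta(k/\log k)$, completing the claim.

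There is no real obstacle here; the statement is a direct consequence of the construction, and the only thing to be careful about is to treat the $I$-coordinate and the $J$-coordinates separately, together with the inductive fact $S_j \subseteq J_{i+1}$ that eliminates conflicts in $J$. The claim serves purely as the ``numerator'' of the competitive-ratio lower bound; the substantive part of the Yao argument (showing that any deterministic algorithm has expected gain $O(1)$ against this distribution) is then carried out in a later lemma and is not needed for this claim.
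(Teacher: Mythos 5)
Your proof is correct, and in fact the paper states this claim without a written proof, treating it as immediate from the construction. Your argument---pairwise disjointness of the $J$-supports via the nested chain $J_1 \supseteq J_2 \supseteq \cdots$ (so $S_j \subseteq J_j \subseteq J_i \setminus S_i$ for $j > i$), together with disjointness of the singleton $I$-supports $\{i\}$, hence pairwise non-conflicting items and feasibility---is exactly the straightforward justification the authors left implicit.
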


We next show that bad items are very likely to be conflicting.

\begin{lemma}
\label{lemma:bad_conflict}
Let $\mathcal{E}$ be the event that there exist
two bad items that are non-conflicting.
Then, $\Pr[\mathcal{E}] \leq \frac{1}{\ell^2}$.
\end{lemma}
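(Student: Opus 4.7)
The plan is to apply a union bound. First observe that any two items arriving in the same phase~$i$ share the dimension $i \in I$ as a non-zero coordinate and so always conflict; hence the event $\mathcal{E}$ requires two bad items $u_p$, $v_q$ lying in distinct phases, say $i < j$. Their $I$-coordinates then differ, so non-conflict is equivalent to $U_p \cap V_q = \emptyset$, where $U_p, V_q$ denote the $J$-parts (the $k-1$ non-zero dimensions in $J$) of $u_p$ and $v_q$. A union bound over such pairs gives
\[
\Pr[\mathcal{E}] \;\leq\; \sum_{1 \leq i < j \leq \ell} \sum_{p, q} \Pr\bigl[\, u_p \text{ is bad and } U_p \cap V_q = \emptyset \,\bigr].
\]

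The key step is to bound a single summand, using the symmetry of the sampling procedure with respect to the ground set. When $u_p$ is bad, $U_p \neq S_i$ so $U_p \subseteq J_{i+1}$ with $|U_p| = k-1$. Condition further on $J_{i+1}$: the remaining randomness---the partitions at phases $i+1$ through $j$ together with the uniform picks $S_{i+1}, \ldots, S_{j-1}$---depends only on $J_{i+1}$ and fresh independent coins, hence is independent of $U_p$. Since the marginal of any single part in a uniformly random balanced partition is itself uniform over subsets of the appropriate size, $V_q$ given $J_{i+1}$ is uniformly distributed over $(k-1)$-subsets of $J_{i+1}$. Writing $N := |J_{i+1}| = (4\ell - i)(k-1) \leq 4\ell(k-1)$, this yields
\[
\Pr\bigl[U_p \cap V_q = \emptyset \bigm| u_p \text{ bad}\bigr] \;=\; \frac{\binom{N - (k-1)}{k-1}}{\binom{N}{k-1}} \;\leq\; \Bigl(1 - \tfrac{k-1}{N}\Bigr)^{k-1} \;\leq\; \exp\!\Bigl(-\tfrac{k-1}{4\ell}\Bigr).
\]

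Substituting $k - 1 = 100 \ell \log_2 \ell$ bounds each summand by $\exp(-25 \log_2 \ell) = \ell^{-\Omega(1)}$, comfortably at most $\ell^{-7}$ for $\ell$ large. The total number of summands is at most $\bigl(\sum_{i=1}^{\ell}(4\ell - i + 1)\bigr)^2 = O(\ell^4)$, giving $\Pr[\mathcal{E}] \leq O(\ell^4) \cdot \ell^{-7} \ll \ell^{-2}$ for $\ell$ sufficiently large (which the construction allows). The main delicate point is making the symmetry argument airtight: one must verify that, after conditioning on $J_{i+1}$, the random choices across later phases are jointly independent of $U_p$ and that the marginal of $V_q$ is truly uniform over $(k-1)$-subsets of $J_{i+1}$. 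This reduces to the standard facts that uniformly random balanced partitions have uniform single-part marginals, combined with the mutual independence of phase-wise random choices given the $J_t$'s.
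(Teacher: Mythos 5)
Your proposal is correct and follows the same high-level plan as the paper's proof: a union bound over all pairs of items from distinct phases (at most $O(\ell^4)$ pairs), with each pair's disjointness probability bounded by $\binom{N-(k-1)}{k-1}/\binom{N}{k-1} \leq (1 - \tfrac{k-1}{N})^{k-1} \leq e^{-25\log_2\ell} \leq \ell^{-7}$ where $N = |J_{i+1}| \leq 4\ell(k-1)$. The one structural difference is the direction of conditioning. The paper first conditions on $S_1,\dots,S_\ell$ and on \emph{all} phase-$j$ items (so the phase-$j$ item $V_q$ is fixed), and then uses only the remaining randomness of the single phase-$i$ partition: a bad item's $J$-part in phase $i$ is, by construction, a uniformly random $(k-1)$-subset of $J_{i+1}=J_i\setminus S_i$, so one immediately gets the disjointness bound. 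You instead fix $U_p$ and $J_{i+1}$, then push the randomness forward through phases $i+1,\dots,j$, which requires the extra lemma that composing a sequence of uniform balanced partitions and uniform single-part picks leaves $V_q$ uniform over $(k-1)$-subsets of $J_{i+1}$. You flag this yourself as the delicate point and your argument for it (uniformity of single-part marginals plus an induction through the $J_t$'s) is sound, but the paper's choice of conditioning on the later phase and exploiting the earlier phase's one-step partition avoids the composition argument entirely and is slightly more economical. Both give the same numerical bound.
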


\begin{proof}
An alternative view of the sampling process is that
the subsets $S_1, S_2, \ldots, S_\ell$ are first sampled for the
good items.  Then, the remaining bad items
can be sampled independently across different phases (but
note that items within the same phase are sampled in a dependent way).

Suppose we condition on the subsets $S_1, S_2, \ldots, S_\ell$ already sampled.
Consider phases $i$ and $j$, where $i < j$.  Next, we further condition on
all the random subsets generated in phase~$j$ for defining the corresponding items.  We fix some bad item~$v$ in phase $j$.

We next use the remaining randomness (for picking
the items) in phase~$i$.  Recall that each bad item in phase~$i$ corresponds to a random subset of size $k-1= 100 \ell \log_2 \ell$
in $J_i \setminus S_i$, where $|J_i \setminus S_i| \leq 4(k-1)\ell$.
If we focus on such a particular (random) subset from phase~$i$,
the probability that it is disjoint from the subset corresponding to item~$v$ (that we fixed from phase~$j$) is at most
$(1 - \frac{k-1}{4(k-1)\ell})^{k-1} \leq \exp(-25 \log_2 \ell) \leq \frac{1}{\ell^7}$.

Observe that there are in total at most $4\ell^2$ items.
Hence, taking a union over all possible pairs of bad items,
the probability of the event~$\mathcal{E}$
is at most $(4\ell^2)^2 \cdot \frac{1}{\ell^7} \leq \frac{1}{\ell^2}$.
\end{proof}

\begin{lemma}
\label{lemma:exp_alg}
For any deterministic algorithm~$\alg$ applied to the above random
procedure, the expected number of items kept in the end is $O(1)$.
\end{lemma}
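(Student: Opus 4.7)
The plan is to decompose the algorithm's final set by phase and distinguish good from bad items. Since every item of phase $i$ has the dimension $i \in I = [\ell]$ as a non-zero coordinate with weight $1 - \epsilon > \tfrac{1}{2}$, the capacity on this dimension admits at most one phase-$i$ item in the final solution; write $c_i$ for that item (or $\bot$ if none) and split
\[ |\alg| = K_g + K_b, \qquad K_g := \sum_{i=1}^\ell \mathbf{1}[c_i = S_i], \qquad K_b := \sum_{i=1}^\ell \mathbf{1}[c_i \neq \bot,\ c_i \neq S_i]. \]

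The core step is a symmetry argument bounding $E[K_g]$. Let $c_i^{(i)}$ denote the phase-$i$ item the algorithm holds at the very end of phase $i$. Throughout phase $i$, the algorithm observes only the random partition of $J_i$ generated in step 2, while the uniform designation of $S_i$ among those $4\ell-i+1$ subsets (step 3) is never revealed until phase $i+1$ begins. Hence $c_i^{(i)}$ is a (possibly randomized) function of the algorithm's state prior to phase $i$ together with that partition, and is independent of which subset in the partition is labeled $S_i$. Conditioning on everything except the step 3 coin flip then gives $\Pr[c_i^{(i)} = S_i] \leq \tfrac{1}{4\ell-i+1}$. Because free disposal can only delete items and phase-$i$ items cannot be re-accepted after the phase ends, $c_i \in \{c_i^{(i)}, \bot\}$, so $\Pr[c_i = S_i] \leq \Pr[c_i^{(i)} = S_i]$. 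Summing,
\[ E[K_g] \;\leq\; \sum_{i=1}^{\ell} \frac{1}{4\ell-i+1} \;=\; H_{4\ell} - H_{3\ell} \;\leq\; \ln\tfrac{4}{3} \;=\; O(1). \]

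For $E[K_b]$, I would invoke Lemma~\ref{lemma:bad_conflict}: outside the event $\mathcal{E}$ every pair of bad items conflicts, so at most one bad item can lie in a feasible set and $K_b \leq 1$; inside $\mathcal{E}$ the trivial per-phase bound gives $K_b \leq \ell$. Since $\Pr[\mathcal{E}] \leq 1/\ell^2$, this yields $E[K_b] \leq 1 + \ell \cdot \tfrac{1}{\ell^2} = O(1)$, and combining with the good-item bound gives $E[|\alg|] = O(1)$.

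The main technical obstacle is making the independence in the symmetry argument watertight. One has to verify that although later phases implicitly expose $S_i$ via $J_{i+1} = J_i \setminus S_i$, this information can only help the algorithm \emph{discard} its phase-$i$ holding, never swap in $S_i$ after the fact; this is precisely where the unrecoverability of discarded items is invoked to pin $c_i$ to $\{c_i^{(i)}, \bot\}$. The rest is bookkeeping: a clean choice of the conditioning $\sigma$-algebra (everything before the step 3 sampling in phase $i$) turns the symmetry into the clean bound $\tfrac{1}{4\ell-i+1}$.
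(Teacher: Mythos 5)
Your proposal is correct and follows essentially the same route as the paper: bound the expected number of good items kept via the observation that $S_i$ is (equivalently) drawn uniformly after the algorithm's phase-$i$ decisions, giving probability at most $\frac{1}{4\ell-i+1}$ per phase, and bound the bad items by conditioning on the event $\mathcal{E}$ of Lemma~\ref{lemma:bad_conflict}. Your refinements (summing the harmonic terms exactly, using the per-phase feasibility bound $K_b \leq \ell$ instead of the paper's cruder total-item bound $4\ell^2$ under $\mathcal{E}$, and pinning $c_i$ to $\{c_i^{(i)},\bot\}$ explicitly) are only bookkeeping differences.
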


\begin{proof}
Let $X$ denote the number of good items and $Y$ denote the number of bad items kept by the algorithm at the end.

Observe that the sampling procedure allows the good item (corresponding to $S_i$) in phase~$i$ to be decided after the deterministic algorithm
finishes making all its decisions in phase~$i$.  Hence, the probability
that the algorithm keeps the good item corresponding to $S_i$
is at most $\frac{1}{4\ell - i + 1} \leq \frac{1}{3\ell}$.  Since this holds for every phase, it follows that $E[X] \leq \frac{1}{3\ell} \cdot \ell = \frac{1}{3}$.

Observe that conditioning on the complementing event~$\overline{\mathcal{E}}$ (refer to
Lemma~\ref{lemma:bad_conflict}), at most 1 bad item can be kept by the algorithm, because any two bad items are conflicting.  Finally, because the total number of items is at most $4\ell^2$,
we have $E[Y] = \Pr[\mathcal{E}] E[Y| \mathcal{E}] +
\Pr[\overline{\mathcal{E}}] E[Y| \overline{\mathcal{E}}]
\leq \frac{1}{\ell^2} \cdot 4 \ell^2 + 1 \cdot 1 \leq 5$.

Hence, $E[X+Y] \leq 6$, as required.
\end{proof}

\begin{corollary}
By Claim~\ref{claim:opt_k} and Lemma~\ref{lemma:exp_alg},
Yao's principle implies that for any randomized algorithm,
there exists a sequence of items such that the value
of an offline optimum is at least $\Theta(\frac{k}{\log k})$,
but the expected value achieved by the algorithm is $O(1)$.
\end{corollary}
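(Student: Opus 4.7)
The plan is to apply Yao's minimax principle to the distribution $\mathcal{D}$ of instances just constructed, using the two preceding results essentially as black boxes. Claim~\ref{claim:opt_k} guarantees that every instance in the support of $\mathcal{D}$ admits an offline feasible solution of size $\ell = \Theta(\frac{k}{\log k})$, so in particular $\mathbb{E}_\mathcal{D}[\opt] \geq \ell$. Lemma~\ref{lemma:exp_alg} shows that for every deterministic algorithm $A$, the expected number of items retained under $\mathcal{D}$ is $O(1)$.

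Combining the two gives, for every deterministic $A$,
\begin{equation*}
\frac{\mathbb{E}_\mathcal{D}[\opt]}{\mathbb{E}_\mathcal{D}[\alg_A]} = \Omega\!\left(\frac{k}{\log k}\right).
\end{equation*}
To lift this to the randomized setting I would view any randomized algorithm $R$ as a distribution over deterministic algorithms and apply linearity of expectation to swap the internal coins of $R$ with the randomness in $\mathcal{D}$: this yields $\mathbb{E}_\mathcal{D}[\mathbb{E}[\alg_R]] = O(1)$, so some instance $I$ in the support of $\mathcal{D}$ must satisfy $\frac{\opt(I)}{\mathbb{E}[\alg_R(I)]} = \Omega(\frac{k}{\log k})$, which is exactly the claimed competitive-ratio lower bound.

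No step poses a real obstacle, since all the combinatorial work has already been done in the distributional analysis. The only mild subtlety worth flagging is that Claim~\ref{claim:opt_k} provides a \emph{pointwise} bound $\opt(I) \geq \ell$ (not merely an in-expectation bound on the optimum), which is what allows the final conclusion to extract a single adversarial instance $I$ rather than only yielding a bound on an expected ratio.
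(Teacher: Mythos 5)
Your proposal is correct and follows essentially the same route as the paper, which states the corollary as an immediate application of Yao's principle to the constructed distribution, using the pointwise bound $\opt \geq \ell = \Theta(\frac{k}{\log k})$ from Claim~\ref{claim:opt_k} together with the $O(1)$ expected value for every deterministic algorithm from Lemma~\ref{lemma:exp_alg}. Your remark that the pointwise (rather than in-expectation) bound on the optimum is what lets one extract a single hard instance is exactly the right subtlety, and the exchange-of-expectations step for randomized algorithms is the standard Yao argument the paper relies on implicitly.
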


\section{Hardness Result: Proof of Theorem~\ref{th:d-hardness}}\label{sec:hardness}

In this section we prove Theorem~\ref{th:d-hardness}, a slightly better hardness result for the \ovpp with no slack.
We show that for the problem with weight vectors in $[0,1]^d$, no randomized algorithm can achieve any $o(k)$-competitive ratio.
By Yao's principle~\cite{focs/Yao77},
we construct a distribution of hard instances (with sparsity $k=d$),
such that any deterministic algorithm cannot perform well in expectation.
Specifically, we shall show that each instance in the support of the distribution
has offline optimal value $\Theta(d)$, but any deterministic algorithm
has expected objective value $O(1)$, 
thereby proving Theorem~\ref{th:d-hardness}.

\paragraph{Distribution of Hard Instances.}
In our instances, we set $f(S) := |S|$, i.e., the objective function is linear, and each item has the same value $1$.
The hardness of the problem is due to the dimension of the vector packing constraints.  We describe how to sample a random instance.  Let $\delta := 2^{-2d}$, and suppose
$\sigma = (\sigma_1, \ldots, \sigma_d)$ is a permutation on $[d]$ sampled uniformly at random.
Items arrive in $d$ phases, and in phase $t \in [d]$, there are $d+1-t$ items.
Each item in phase $t$ has a weight vector in $[0,1]^d$ of the following form:
\begin{compactitem}
\item[1.] The coordinates in $\{\sigma_1, \ldots, \sigma_{t-1}\}$
have value 0.
\item[2.] There are $d+1-t$ coordinates with positive values.  Exactly one has value
$1-(2^t-1)\delta$, while the remaining $d-t$ coordinates have
value $2^t\delta$.
\end{compactitem}
Observe that in phase $t$, there are exactly $d+1-t$ weight vectors of this form.
After seeing the first item in phase $t$,
the algorithm knows $\sigma_1, \sigma_2, \ldots \sigma_{t-1}$,
and can deduce what the $d+1-t$ items in this phase are like.  
Hence, we may assume that the items in phase $t$ arrive according to the
lexicographical order on the weight vectors.

To describe the structure of the weight vectors, in phase $t$, for $j \in [d+1-t]$, we use $(t,j)$ to index an item.
It is evident that it is impossible to fit $2$ weight vectors from the same phase simultaneously.  Therefore,
we can also interpret that in phase $t$, there are $d+1-t$ options and the algorithm
can choose at most 1 of them.

\begin{figure}[htb]
		\vspace*{-20pt}
		\centering
		\includegraphics*[width = \textwidth]{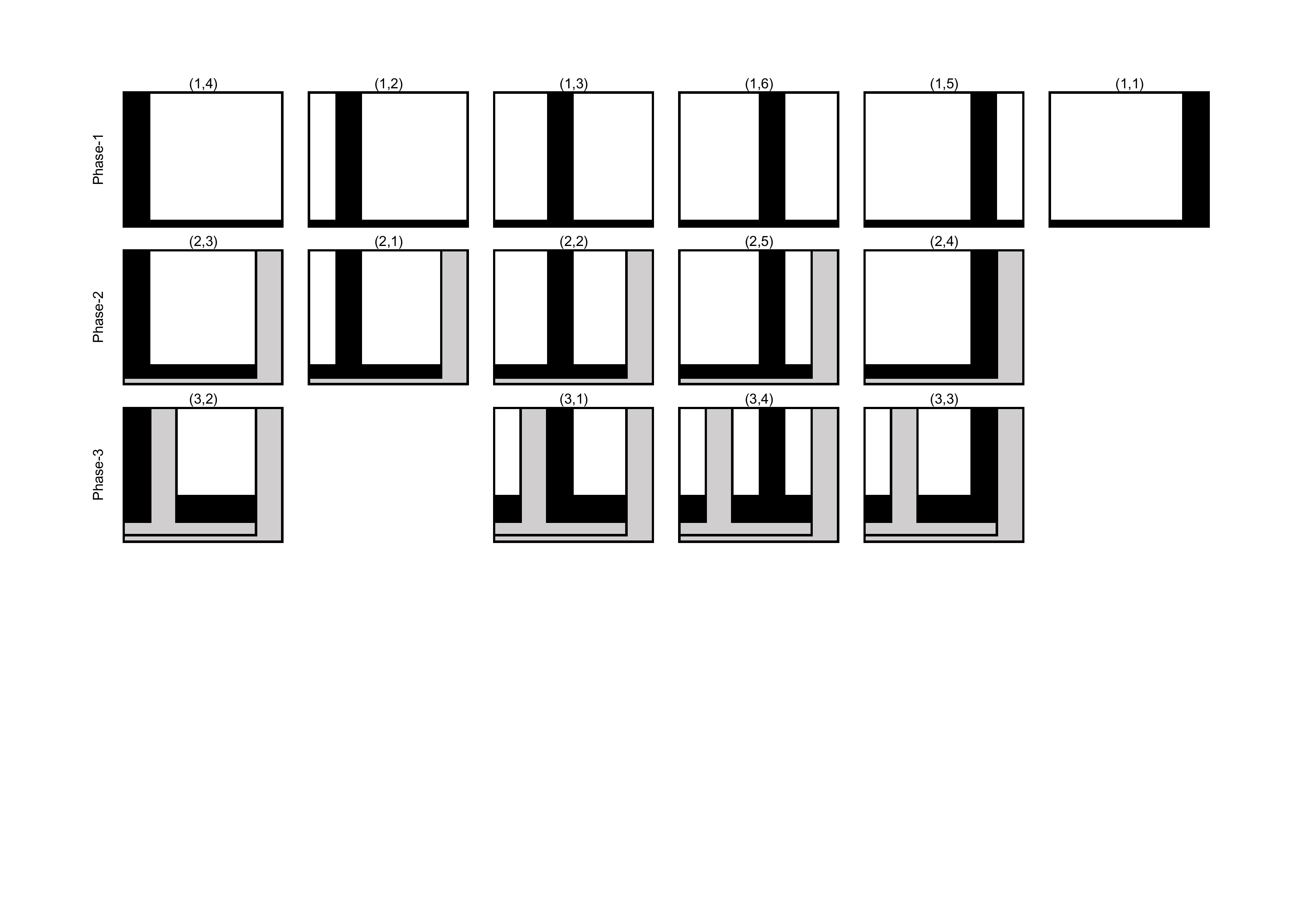}
		\vspace*{-4.5cm}
		\caption{Hard instance: in the illustrative example, we have $d=6$ and $\sigma = (6,2,3,1,5,4)$. The weight of each item is represented by the black histogram (the height of each bar represents the weight at each dimension), while the gray histograms represent the item currently accepted by the optimal solution.}\label{fig:hard-instance-1}
\end{figure}
	
However, observe that the index $j$ does not
indicate the arrival order within a phase, otherwise information about the permutation $\sigma$ would be leaked.
The weight vector of item $(t,j)$ on each dimension $i \in [d]$ is defined as follows (refer to Figure~\ref{fig:hard-instance-1}).
\begin{equation*}
	\w_{t,j}(i) = \begin{cases}
	0, &i\in\{ \sigma_1,\ldots,\sigma_{t-1} \}\\
	1-(2^t-1)\delta, &i=\sigma_{t+j-1}\\
	2^t\delta, &\text{otherwise}.
	\end{cases}
\end{equation*}
	
\paragraph{Optimal solution.}
In the optimal solution, we can accept items $\{(t,1)\}_{t \in [d]}$ simultaneously with utility $d$.
To check feasibility, for each $i \in [d]$, in dimension $\sigma_i$, the total weight is:
\begin{equation*}
	\sum_{t\in[d]} \w_{t,1}(\sigma_i) = \sum_{t<i} 2^t\delta + (1-(2^i-1)\delta) = 1-\delta<1.
\end{equation*}
	
\paragraph{Alternative Setup.}
To facilitate the analysis, we explain why the following assumptions can be justified.
\begin{compactitem}
	\item[1.] As explained before, at the beginning
	of phase $t \in [d]$, the algorithm already knows $\sigma_1, \ldots, \sigma_{t-1}$,
	and also the $d+1-t$ weight vectors that arrive in phase $t$.
	\item[2.] In phase $t$, we can assume that the algorithm 
	decides whether to choose one out of the $d+1-t$ items.
	\item[3.] After the algorithm makes the decision in phase $t$,
	then $\sigma_t$ is drawn uniformly at random from $[d] \setminus \{\sigma_1, \ldots, \sigma_{t-1}\}$.  Observe this randomness is oblivious to the choice
	of the algorithm during phase $t$, and hence is equivalent to the original description.
	\item[4.] After $\sigma_t$ is revealed to the algorithm (at the end of phase~$t$), then the algorithm knows whether it has chosen $(t,1)$ in phase~$t$.
	Observe in the last phase~$d$, there is only one item~$(d,1)$.
	We shall argue using the next lemma to see that the algorithm gains something from phase~$t$ only if it has chosen $(t,1)$.
\end{compactitem}
	
\begin{lemma}[Conflicting Items]\label{lemma:conflict}
		Suppose in phase~$t < d$, the algorithm has chosen an item that is not~$(t,1)$.
		Then, unless the algorithm disposes of this chosen item, it cannot take any item in phase~$t+1$.
		
		Moreover, if the algorithm has chosen a item in phase~$t$ that is not $(t,1)$, then it can replace this item
		with another item in phase~$t+1$ without losing anything.
\end{lemma}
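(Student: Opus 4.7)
I will prove the two parts via a single ``conflict dimension''---$\sigma_{t+j-1}$, the heavy coordinate of $(t,j)$---and a careful substitution, with the feasibility of the substitution justified through a canonical form maintained by an outer induction on the phase index.

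\paragraph{Part 1.} Since $j\geq 2$, we have $t+j-1\geq t+1$, so $\sigma_{t+j-1}\notin\{\sigma_1,\ldots,\sigma_t\}$. Therefore every phase-$(t{+}1)$ item $(t{+}1,j')$ assigns weight at least $2^{t+1}\delta$ to dimension $\sigma_{t+j-1}$, regardless of $j'$ (either as its own heavy coordinate of value $1-(2^{t+1}-1)\delta$, or as an ordinary non-zero coordinate of value $2^{t+1}\delta$). Adding this to the contribution $1-(2^t-1)\delta$ already present from $(t,j)$, the load on $\sigma_{t+j-1}$ becomes at least $1+(2^t+1)\delta>1$, violating feasibility. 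Hence the algorithm cannot add any phase-$(t{+}1)$ item without first disposing of $(t,j)$.

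\paragraph{Part 2.} I would explicitly substitute $(t,j)$ by the item $(t{+}1,j{-}1)$, whose heavy coordinate is $\sigma_{(t+1)+(j-1)-1}=\sigma_{t+j-1}$, identical to that of $(t,j)$. A coordinate-by-coordinate comparison of $\w_{t,j}$ and $\w_{t+1,j-1}$ yields: no change on $\sigma_s$ for $s\leq t-1$ (both are $0$); a decrease of exactly $2^t\delta$ on $\sigma_t$ (from $2^t\delta$ down to $0$) and on $\sigma_{t+j-1}$ (the heavy coordinate becomes lighter by $2^t\delta$); and an increase of exactly $2^t\delta$ on every other coordinate $\sigma_s$ with $s\geq t+1$. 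Feasibility on ``decrease'' coordinates is automatic. On an ``increase'' coordinate $\sigma_s$ (with $s\geq t+1$, $s\neq t+j-1$), the canonical-form invariant (described next) ensures every contribution to $\sigma_s$ is a small weight $2^\tau\delta$, so the post-substitution load is at most $\sum_{\tau=1}^{t-1}2^\tau\delta+2^{t+1}\delta\leq(3\cdot 2^t-2)\delta\leq 3\cdot 2^{-d}<1$, using $\delta=2^{-2d}$ and $t\leq d-1$.

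\paragraph{Main obstacle.} The delicate step is ruling out the possibility that some earlier phase~$\tau<t$ deposited a heavy weight $\approx 1$ on an ``increase'' coordinate $\sigma_s$, which the $2^t\delta$ bump could then push past capacity. I would handle this by folding the lemma into an induction on $t$: assume as inductive invariant that at the start of phase $t$ the kept configuration is in \emph{canonical form}, meaning every retained item from an earlier phase~$\tau$ equals $(\tau,1)$, whose heavy coordinate is $\sigma_\tau$ and hence cannot coincide with any $\sigma_s$ for $s>\tau$. Whenever the algorithm picks a non-canonical item $(\tau,j_\tau)$ with $j_\tau\geq 2$ in some earlier phase, Part 1 guarantees that no phase-$(\tau{+}1)$ item was subsequently admitted, so Part 2 can be applied at that moment to rewrite the history with $(\tau{+}1,j_\tau{-}1)$, preserving both the item count and the canonical form (after at most finitely many such rewrites, all retained items are of the form $(\tau,1)$). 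Under this invariant the feasibility estimate above is valid, completing the proof; the entire construction relies critically on the choice $\delta=2^{-2d}$ being exponentially small in $d$ so that the cumulative $2^t\delta$ slack never escapes capacity.
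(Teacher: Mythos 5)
Your Part~1 is correct and is exactly the paper's argument: the heavy coordinate $\sigma_{t+j-1}$ of $(t,j)$ (with $j\ge 2$) lies outside $\{\sigma_1,\dots,\sigma_t\}$, so every phase-$(t{+}1)$ vector puts at least $2^{t+1}\delta$ there, and $1-(2^t-1)\delta+2^{t+1}\delta>1$. Your substitution $(t,j)\to(t{+}1,j{-}1)$ in Part~2, with the coordinate-by-coordinate comparison, is also the paper's transformation. The problem is the machinery you build around it. The ``main obstacle'' you identify cannot occur, and no induction or history-rewriting is needed to exclude it: if the currently held set (which contains $(t,j)$) is feasible, then no co-held item can have its heavy coordinate at any $\sigma_s$ with $s\ge t$, because $(t,j)$ itself places weight at least $2^t\delta$ on every such coordinate, and $\bigl(1-(2^\tau-1)\delta\bigr)+2^t\delta=1+(2^t-2^\tau+1)\delta>1$ for every $\tau\le t$. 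Hence every co-held item contributes only a small weight $2^\tau\delta$ to each ``increase'' coordinate, and your numeric bound $\sum_{\tau\le t-1}2^\tau\delta+2^{t+1}\delta<1$ goes through directly, for an arbitrary feasible configuration rather than only a ``canonical'' one.

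As written, the canonical-form argument itself has a gap: rewriting $(\tau,j_\tau)$ into $(\tau{+}1,j_\tau{-}1)$ does not restore canonical form when $j_\tau\ge 3$, so the claimed invariant (``at the start of every phase all retained earlier items are $(\tau,1)$'') is simply false at intermediate phases; making the ``finitely many rewrites'' step rigorous would require the additional (unstated, though true) observation that no item from phases $\tau+1,\dots,\tau+j_\tau-1$ can be accepted while the non-canonical item is held, and it also makes the lemma's proof depend on an assumption about the configuration that the lemma, as stated and as used in the improved analysis, does not carry. Separately, you only verify feasibility of the swap at the moment it is made; the substance of the paper's point about the increase from $2^t\delta$ to $2^{t+1}\delta$ is that it creates no \emph{future} conflicts either, since for any later phase $\widehat t>t+1$ the heavy value is $1-(2^{\widehat t}-1)\delta$ while the small weights from distinct phases sum to at most $(2^{\widehat t}-2)\delta$. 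This forward-looking check is what ``without losing anything'' really needs, and in your proposal it is only gestured at in the final sentence rather than proved.
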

\begin{proof}
	If the algorithm has chosen an item $u$ in phase~$t$ that is not $(t,1)$,
	then there is some dimension~$i$, whose consumption is
	at least~$1-(2^t-1)\delta$.  Observe that in phase~$t+1$,
	every weight vector will have a value at least $2^{t+1} \delta$ in this dimension~$i$.
	Hence, unless the chosen item $u$ in phase~$t$ is disposed of,
	no item from phase~$t+1$ can be chosen.
	
	Observe that in phase~$t+1$, there is an item $\widehat{u}$ whose weight vector can be obtained by transforming from $\w_u$ as follows.
	\begin{compactitem}
	\item[1.] The coordinate corresponding to $\sigma_t$ is set to 0.
	\item[2.] The coordinate with value $1 - (2^t-1) \delta$ is decreased to
	$1 - (2^{t+1} - 1) \delta$.
	\item[3.] Every other coordinate with value $2^t \delta$ is increased
	to $2^{t+1} \delta$.
	\end{compactitem}
	
	If we replace item $u$ from phase~$t$ with
	item $\widehat{u}$ from phase~$t+1$, certainly the first two items of the weight vector modifications will not hurt.  Observe that the third item will not have any adverse effect in the future, because for phase $\widehat{t} > t+1$,
	the largest value in the coordinate is at most $1 - (2^{\widehat{t}} - 1)\delta$.
	Hence, item~3 will not create additional conflicts in the future.
\end{proof}
	
\paragraph{Warm Up Analysis.}
In our alternative setup, because $\sigma_t$ is revealed only at the end of phase~$t$,
it follows that in phase~$t$, the probability that the algorithm chooses $(t,1)$ is $\frac{1}{d+1-t}$.
Hence, the expected value gain by the algorithm is $\sum_{t=1}^d \frac{1}{d+1-t} = \Theta(\log d)$.

\paragraph{Improved Analysis.} 
Observe that the probability of accepting the ``correct'' item in each phase increases as $t$ increases.	Suppose
we stop after phase $\frac{d}{2}$.  Then, the optimal value is $\frac{d}{2}$.
However, using Lemma~\ref{lemma:conflict}, the expected value gain is
at most $1 + \sum_{t=1}^{\frac{d}{2} -1 } \frac{1}{d+1-t} = \Theta(1)$,
where we can assume that the algorithm can gain value 1 in phase~$\frac{d}{2}$. 
This completes the proof of Theorem~\ref{th:d-hardness} (observe that the sparsity of weight vectors $k=d$).

\section{Hardness under Small Weight Assumption}\label{sec:hardness_light}

As we have shown in Section~\ref{sec:k_sparse}, as long as $\max_{u\in\Omega}\{ \|\w_u\|_\infty\}$ is strictly smaller than $1$ (by some constant),
we can obtain an $O(k)$ competitive ratio for the \ovpp (with arbitrary arrival order and free disposal).
However, for random arrival order, it is shown~\cite{stoc/KesselheimTRV14} that
the competitive ratio can be arbitrarily close to $1$ if each coordinate of the weight vector
is small enough.

We describe in this section a distribution of hard instances (with sparsity $k=\Theta(d)$ and $f(S) := |S|$) such that 
any deterministic algorithm has an expected objective value $O(\frac{\log\log d}{\log d}\cdot \opt)$,
where $\opt$ is the offline optimal value, hence (by Yao's principle) proving Theorem~\ref{th:1-hardness}.

\paragraph{Distribution of Hard Instances.}
Each item $u$ in our hard instances has unit utility and weight vector $\w_u \in \{0,\epsilon\}^d$, where $\epsilon>0$ can be any arbitrarily small number (not necessarily constant).
Hence, an item is characterized by its non-zero dimensions (of its weight vector).
The sparsity $k$ of the weight vectors is at most $d$.
We fix some integer $\ell$, and let items arrive in $\ell$ phases.
Define the total number of dimensions $d := \frac{(2l)!}{l!} + l$, i.e., $l = \Theta(\frac{\log d}{\log \log d})$.  We write collection of dimensions $[d] := I \cup J$ as the disjoint
union of $I := [\ell]$ and $J := [d] \setminus I$.
For each phase~$i \in [\ell]$, we define $b_i := \frac{(2 \ell - i +1)!}{\ell !}$.
We initialize $J_1 := J$.
We describe how the items are generated in each phase~$i$, starting from $i = 1$ to $\ell$:

\begin{enumerate}
\item[1.] Each item in phase~$i$ has non-zero dimension~$i \in I$,
while every dimension in $I \setminus \{i\}$ is zero.

\item[2.] For the rest of the dimensions in $J$, only dimensions in $J_i$ can be non-zero.
The set $J_i$ has size $b_i$, and is partitioned in an arbitrary, possibly deterministic, manner into $(2 \ell - i + 1)$ subsets $\{S^{(i)}_j: j \in [2 \ell -i +1]\}$, each of size $b_{i+1}$.  Each such subset $S^{(i)}_j$ defines $\frac{1}{\epsilon}$ copies
of a type of item, which has non-zero dimensions in $J_i \setminus S^{(i)}_j$ (together
with dimension~$i$ in the previous step).  Hence, in total, phase~$i$
has $\frac{1}{\epsilon} \times (2 \ell  - i +1)$ items that can arrive in an arbitrary order.

\item[3.] If $i < \ell$, at the end of phase~$i$, we pick $\sigma_i \in [2 \ell - i +1]$
uniformly at random, and set $J_{i+1} := S^{(i)}_{\sigma_i}$.
\end{enumerate}

\begin{claim}[Offline Optimal Solution]
\label{claim:opt_eps}
Given any sequence of items, one can include the all $\frac{1}{\epsilon}$ items
of the type corresponding to $S^{(i)}_{\sigma_i}$ from each phase~$i \in [\ell]$
to form a feasible set of items.
Hence, there is an offline optimal solution has at least $\opt \geq \frac{\ell}{\epsilon} = \Theta(\frac{\log d}{\epsilon \log \log d})$ items.
\end{claim}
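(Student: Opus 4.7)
The plan is to verify feasibility of the proposed set dimension by dimension, exploiting the nested structure $J_1 \supseteq J_2 \supseteq \cdots$ produced by the sampling procedure. Concretely, let $T$ denote the union over all phases $i \in [\ell]$ of the $\frac{1}{\epsilon}$ copies of the type corresponding to $S^{(i)}_{\sigma_i}$; equivalently, the items whose non-zero coordinates are $\{i\} \cup (J_i \setminus S^{(i)}_{\sigma_i}) = \{i\} \cup (J_i \setminus J_{i+1})$, where we adopt the convention $J_{\ell+1} := S^{(\ell)}_{\sigma_\ell}$ to treat phase $\ell$ uniformly. Since every selected item has all non-zero entries equal to $\epsilon$ and each phase contributes $\frac{1}{\epsilon}$ such items, on any fixed coordinate $i^* \in [d]$, each phase $i$ contributes either $0$ or exactly $\frac{1}{\epsilon} \cdot \epsilon = 1$ to the total load on that coordinate. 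Thus feasibility reduces to arguing that for every $i^* \in [d]$, at most one phase contributes nonzero weight.

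For coordinates $i^* \in I = [\ell]$: by construction, only items from phase $i^*$ have a nonzero entry on coordinate $i^*$, so the total load on that coordinate is exactly $1$. For coordinates $i^* \in J$: phase $i$ contributes iff $i^* \in J_i \setminus J_{i+1}$. I would then observe that the sets $\{J_i \setminus J_{i+1}\}_{i \in [\ell]}$ are pairwise disjoint, which follows immediately from the chain $J_1 \supseteq J_2 \supseteq \cdots \supseteq J_{\ell+1}$ guaranteed by the rule $J_{i+1} = S^{(i)}_{\sigma_i} \subseteq J_i$. Hence each coordinate $i^* \in J$ is hit by at most one phase, and the load on that coordinate is at most $1$. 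This establishes $\sum_{u \in T} \w_u \leq \one$, i.e.\ $T$ is feasible.

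Finally I would compute $|T| = \frac{\ell}{\epsilon}$ and translate the growth rate of $\ell$ into the stated $\Theta\bigl(\frac{\log d}{\epsilon \log \log d}\bigr)$ bound. Since $d = \frac{(2\ell)!}{\ell!} + \ell = \prod_{j=1}^{\ell}(\ell+j) + \ell$ and $\ell^\ell \leq \prod_{j=1}^{\ell}(\ell+j) \leq (2\ell)^\ell$, taking logarithms gives $\log d = \Theta(\ell \log \ell)$, whence $\ell = \Theta(\log d / \log \log d)$, proving the claimed lower bound on $\opt$.

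There is no real obstacle here: the entire argument is a bookkeeping check that the nested chain of $J_i$'s partitions the coordinates used by the chosen items across phases, and all weights are uniformly $\epsilon$ so that $\frac{1}{\epsilon}$ copies exactly fill a coordinate. The only thing to be careful about is the boundary case of phase $\ell$ (where $J_{\ell+1}$ is not defined by the procedure itself), which is handled cleanly by the convention $J_{\ell+1} := S^{(\ell)}_{\sigma_\ell}$ so that the disjointness argument applies uniformly.
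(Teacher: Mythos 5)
Your verification is correct and is exactly the argument the paper leaves implicit (the claim is stated without a written proof). The key points—using the nested chain $J_1 \supseteq J_2 \supseteq \cdots$ to deduce that the sets $J_i \setminus J_{i+1}$ are pairwise disjoint in $J$, noting that coordinates in $I$ are each touched by only one phase, and observing that $\frac{1}{\epsilon}$ copies of weight $\epsilon$ saturate a coordinate to exactly $1$—are precisely what is needed, and the asymptotic bound $\ell = \Theta(\log d / \log\log d)$ via $\log d = \Theta(\ell \log \ell)$ is computed correctly.
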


\begin{lemma}
\label{lemma:alg_eps}
For any deterministic algorithm~$\alg$ applied to the above random
procedure, the expected number of items kept in the end is
at most $O(\frac{1}{\epsilon}) \leq O(\frac{\log \log d}{\log d})$.
\end{lemma}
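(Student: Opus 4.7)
The plan is to decompose the algorithm's final output by phase, and for the first $\ell-1$ phases further by whether a kept item is of the ``good'' type $\sigma_i$ or of a ``bad'' type $j \neq \sigma_i$. Writing $\alg = \sum_{i=1}^{\ell-1}(A_i^g + A_i^b) + A_\ell$, I will bound each of $\sum_{i<\ell} A_i^b$, $A_\ell$, and $E[\sum_{i<\ell} A_i^g]$ by $O(1/\epsilon)$, which immediately yields the claim.

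Bad items are handled deterministically via feasibility on a single dimension in $J_\ell$. A type-$j$ item from phase $i < \ell$ with $j \neq \sigma_i$ has non-zero coordinates $J_i \setminus S^{(i)}_j$; since the sets $\{S^{(i)}_{j'}\}_{j'}$ partition $J_i$ and $J_\ell \subseteq J_{i+1} = S^{(i)}_{\sigma_i}$, this set contains $J_\ell$, so the item carries weight $\epsilon$ on every $d \in J_\ell$. Feasibility on one such $d$ therefore gives $\sum_{i<\ell} \epsilon\, A_i^b \leq 1$, i.e. $\sum_{i<\ell} A_i^b \leq 1/\epsilon$. The bound $A_\ell \leq 1/\epsilon$ is immediate from the capacity on dimension $\ell \in I$, which is shared by every phase-$\ell$ item.

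The heart of the argument is a symmetry bound on the expected number of good items. Let $z_j^{(i)}$ denote the number of type-$j$ items kept at the moment just before $\sigma_i$ is drawn at the end of phase $i$, and set $Z_i := \sum_j z_j^{(i)}$. Because the partition $\{S^{(i)}_j\}_j$ of $J_i$ is a deterministic function of the history $\sigma_1,\ldots,\sigma_{i-1}$, the entire sequence of items in phase $i$ and every decision of the deterministic algorithm during phase $i$ are measurable with respect to the history alone, and hence are independent of the fresh uniform draw $\sigma_i$. Averaging over $\sigma_i$ then gives $E[z_{\sigma_i}^{(i)} \mid \mathrm{history}] = Z_i/(2\ell-i+1)$, and since a disposed item cannot be recovered, $A_i^g \leq z_{\sigma_i}^{(i)}$. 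To bound $Z_i$, feasibility at the same moment applied to any $d \in S^{(i)}_{j^*}$ yields $\epsilon \sum_{j \neq j^*} z_j^{(i)} \leq 1$ for every $j^* \in [2\ell-i+1]$ (contributions from earlier phases only tighten this). Specializing to $j^* = \arg\min_j z_j^{(i)}$ and using $\min_j z_j^{(i)} \leq Z_i/(2\ell-i+1)$ gives $Z_i \bigl(1 - \tfrac{1}{2\ell-i+1}\bigr) \leq \tfrac{1}{\epsilon}$, whence $E[A_i^g \mid \mathrm{history}] \leq \tfrac{1}{\epsilon(2\ell-i)}$. Summing produces $\sum_{i<\ell} E[A_i^g] \leq \tfrac{1}{\epsilon} \sum_{t=\ell+1}^{2\ell-1} \tfrac{1}{t} = O(1/\epsilon)$.

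The one subtle ingredient is the independence claim underlying the symmetry average, namely that $(z_j^{(i)})_j$ is determined by $\sigma_1,\ldots,\sigma_{i-1}$ and is independent of $\sigma_i$. This uses the fact that both the partition $\{S^{(i)}_j\}$ and the within-phase arrival order can be fixed as deterministic functions of the history (as the hard distribution already allows), so that $\sigma_i$ enters the analysis only through the a posteriori identification of which type is ``good''. Once this is in place, the three bounds combine to $E[\alg] \leq O(1/\epsilon)$, which together with Claim~\ref{claim:opt_eps} and Yao's principle completes the proof of Theorem~\ref{th:1-hardness}.
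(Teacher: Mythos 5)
Your proof is correct and rests on the same core idea as the paper's: the draw $\sigma_i$ is uniform and independent of the algorithm's phase-$i$ decisions, so a fixed kept item from phase $i$ is ``good'' with probability $\frac{1}{2\ell-i+1}$, and the resulting harmonic-type sum over phases is $O(1)$. Where you genuinely diverge is in handling the bad items. The paper asserts as a reduction that ``we can assume that in phase~$i$, any item accepted that has a non-zero dimension in $S^{(i)}_{\sigma_i}$ is disposed of later,'' justified informally by the fact that keeping such an item blocks all later phases while the last phase alone offers $\frac{1}{\epsilon}$ items; it then treats every phase-$i$ item as surviving with probability $\frac{1}{2\ell-i+1}$. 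You instead avoid this without-loss-of-generality step entirely: you observe that every bad item from every phase $i<\ell$ carries weight $\epsilon$ on \emph{all} of $J_\ell$ (because $J_\ell \subseteq S^{(i)}_{\sigma_i} \subseteq J_i\setminus S^{(i)}_j$ whenever $j\neq\sigma_i$), so a single feasibility constraint on one dimension of $J_\ell$ caps the total number of bad items kept in the final solution across all phases at $\frac{1}{\epsilon}$. This makes the argument fully rigorous with no appeal to modifying the algorithm, at the modest cost of a slightly longer decomposition. One minor inefficiency: your bound $Z_i\le\frac{2\ell-i+1}{(2\ell-i)\epsilon}$, derived from a dimension in $S^{(i)}_{j^*}$, is actually weaker than the immediate bound $Z_i\le\frac{1}{\epsilon}$ coming from dimension $i\in I$ (which you already use for $A_\ell$); using the simpler bound would shorten the middle step, though it does not affect the final $O(1/\epsilon)$.
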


\begin{proof}
By construction, at most $\frac{1}{\epsilon}$ items can be accepted in each phase~$i$,
because of dimension~$i \in I$.
Fix some $1 \leq i < \ell$. Observe that $\sigma_i$ can be chosen at the end of phase~$i$,
after the algorithm has made all its decisions for phase~$i$.
An important observation is that any item chosen in phase~$i$
that has a non-zero dimension in $S^{(i)}_{\sigma_i}$ will
conflict with all items arriving in later phases.  Because there
are at least $\frac{1}{\epsilon}$ potential items that we can take in the last phase,
we can assume that in phase~$i$, any item accepted that has a non-zero dimension
in $S^{(i)}_{\sigma_i}$ is disposed of later.

Hence, for $1 \leq i < \ell$, each item accepted in phase~$i$ by the algorithm will
remain till the end with probability  $\frac{1}{2 \ell - i + 1}$.

Therefore, the expected number of items that are kept by the algorithm at the end is at most
$\frac{1}{\epsilon} (1 + \sum_{i = 1}^{\ell - 1} \frac{1}{2 \ell - i + 1}) = O(\frac{1}{\epsilon})$, as required.
\end{proof}

\begin{corollary}
By Claim~\ref{claim:opt_eps} and Lemma~\ref{lemma:alg_eps},
Yao's principle implies that for any randomized algorithm,
there exists a sequence of items such that
the expected value achieved by the algorithm is
at most $O(\frac{\log \log d}{\log d}) \cdot \opt$,
where $\opt$ is the value achieved by an offline optimal solution.
\end{corollary}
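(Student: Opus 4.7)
The plan is to combine Claim~\ref{claim:opt_eps} and Lemma~\ref{lemma:alg_eps} through a standard Yao's principle argument. Claim~\ref{claim:opt_eps} guarantees that \emph{every} instance in the support of the distribution~$D$ constructed above admits a feasible solution of size at least $\frac{\ell}{\epsilon}$, while Lemma~\ref{lemma:alg_eps} shows that any fixed deterministic algorithm retains only $O(\frac{1}{\epsilon})$ items in expectation over $D$. Since $\ell = \Theta\!\bigl(\frac{\log d}{\log\log d}\bigr)$, the gap between these two quantities is of the desired order.

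More precisely, for any randomized algorithm~$R$, I would view $R$ as a distribution over deterministic algorithms (one for each realization of its random tape). Applying Lemma~\ref{lemma:alg_eps} pointwise for every realization and exchanging the order of expectations yields $\mathbb{E}_{I\sim D}\bigl[\mathbb{E}_R[\alg_R(I)]\bigr] \leq O(\frac{1}{\epsilon})$. Combined with $\opt(I) \geq \frac{\ell}{\epsilon}$ for every $I \in \mathrm{supp}(D)$ from Claim~\ref{claim:opt_eps}, we have $\mathbb{E}_{I\sim D}[\opt(I) - c\cdot \mathbb{E}_R[\alg_R(I)]] \geq \frac{\ell}{\epsilon} - c\cdot O(\frac{1}{\epsilon})$, which stays non-negative for some $c = \Omega(\ell)$. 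A standard averaging step then produces an instance $I^\star \in \mathrm{supp}(D)$ on which $\mathbb{E}_R[\alg_R(I^\star)] \leq O\!\bigl(\frac{\log\log d}{\log d}\bigr)\cdot \opt(I^\star)$, as claimed.

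To line the bound up with Theorem~\ref{th:1-hardness}, which is stated in terms of the sparsity~$k$, I would note that in the construction each weight vector has roughly $1 + (b_i - b_{i+1})$ non-zero coordinates; the early phases in particular achieve $k = \Theta(d)$ (and $k \leq d$ always), so $\frac{\log d}{\log\log d} = \Theta\!\bigl(\frac{\log k}{\log\log k}\bigr)$ and the corollary immediately delivers the promised $\Omega\!\bigl(\frac{\log k}{\log\log k}\bigr)$ lower bound on the randomized competitive ratio.

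The proof is largely routine once Claim~\ref{claim:opt_eps} and Lemma~\ref{lemma:alg_eps} are in place; the only mild subtlety is that $\opt(I)$ varies with the instance $I$, but this causes no trouble because Claim~\ref{claim:opt_eps} provides a \emph{uniform} lower bound on $\opt(I)$ across the entire support of~$D$, so a single averaging step over~$D$ suffices and no concentration or Markov-style argument on $\opt$ itself is required.
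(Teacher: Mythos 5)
Your proposal is correct and matches the paper's (implicit) argument: the paper states this corollary as an immediate consequence of Claim~\ref{claim:opt_eps} and Lemma~\ref{lemma:alg_eps} via the standard Yao's principle argument, which is exactly what you spell out — viewing the randomized algorithm as a mixture of deterministic ones, exchanging expectations, and using the uniform lower bound $\opt(I) \geq \frac{\ell}{\epsilon}$ on the support together with an averaging step to extract a single bad instance. Your handling of the varying $\opt(I)$ via the linear combination $\opt(I) - c\cdot\mathbb{E}_R[\alg_R(I)]$ and your remark that $k = \Theta(d)$ (so the bound transfers from $d$ to $k$) are both sound and consistent with the paper.
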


{
\bibliography{packing}

\newcommand{\etalchar}[1]{$^{#1}$}
\begin{thebibliography}{AGKM11}

\bibitem[ACC{\etalchar{+}}16]{esa/AbolhassaniCCEH16}
Melika Abolhassani, T.{-}H.~Hubert Chan, Fei Chen, Hossein Esfandiari,
  MohammadTaghi Hajiaghayi, Hamid Mahini, and Xiaowei Wu.
\newblock Beating ratio 0.5 for weighted oblivious matching problems.
\newblock In {\em {ESA}}, volume~57 of {\em LIPIcs}, pages 3:1--3:18. Schloss
  Dagstuhl - Leibniz-Zentrum fuer Informatik, 2016.

\bibitem[ACFR16]{soda/AzarCFR16}
Yossi Azar, Ilan~Reuven Cohen, Amos Fiat, and Alan Roytman.
\newblock Packing small vectors.
\newblock In {\em {SODA}}, pages 1511--1525. {SIAM}, 2016.

\bibitem[ACKS13]{stoc/AzarCKS13}
Yossi Azar, Ilan~Reuven Cohen, Seny Kamara, and F.~Bruce Shepherd.
\newblock Tight bounds for online vector bin packing.
\newblock In {\em {STOC}}, pages 961--970. {ACM}, 2013.

\bibitem[AGKM11]{soda/AggarwalGKM11}
Gagan Aggarwal, Gagan Goel, Chinmay Karande, and Aranyak Mehta.
\newblock Online vertex-weighted bipartite matching and single-bid budgeted
  allocations.
\newblock In {\em SODA}, pages 1253--1264, 2011.

\bibitem[BFS15]{soda/BuchbinderFS15a}
Niv Buchbinder, Moran Feldman, and Roy Schwartz.
\newblock Online submodular maximization with preemption.
\newblock In {\em {SODA}}, pages 1202--1216. {SIAM}, 2015.

\bibitem[BHK09]{ec/BabaioffHK09}
Moshe Babaioff, Jason~D. Hartline, and Robert~D. Kleinberg.
\newblock Selling ad campaigns: online algorithms with cancellations.
\newblock In {\em {EC}}, pages 61--70. {ACM}, 2009.

\bibitem[BKNS10]{ipco/BansalKNS10}
Nikhil Bansal, Nitish Korula, Viswanath Nagarajan, and Aravind Srinivasan.
\newblock On \emph{k}-column sparse packing programs.
\newblock In {\em {IPCO}}, volume 6080 of {\em Lecture Notes in Computer
  Science}, pages 369--382. Springer, 2010.

\bibitem[BN09]{mor/BuchbinderN09}
Niv Buchbinder and Joseph Naor.
\newblock Online primal-dual algorithms for covering and packing.
\newblock {\em Math. Oper. Res.}, 34(2):270--286, 2009.

\bibitem[CFMP09]{soda/ConstantinFMP09}
Florin Constantin, Jon Feldman, S.~Muthukrishnan, and Martin P{\'{a}}l.
\newblock An online mechanism for ad slot reservations with cancellations.
\newblock In {\em {SODA}}, pages 1265--1274. {SIAM}, 2009.

\bibitem[CHJ{\etalchar{+}}17]{soda/ChanHJKT17}
T.{-}H.~Hubert Chan, Zhiyi Huang, Shaofeng~H.{-}C. Jiang, Ning Kang, and
  Zhihao~Gavin Tang.
\newblock Online submodular maximization with free disposal: Randomization
  beats 1/4 for partition matroids.
\newblock In {\em {SODA}}, pages 1204--1223. {SIAM}, 2017.

\bibitem[CHN14]{esa/CharikarHN14}
Moses Charikar, Monika Henzinger, and Huy~L. Nguyen.
\newblock Online bipartite matching with decomposable weights.
\newblock In {\em {ESA}}, volume 8737 of {\em Lecture Notes in Computer
  Science}, pages 260--271. Springer, 2014.

\bibitem[CVZ14]{siamcomp/ChekuriVZ14}
Chandra Chekuri, Jan Vondr{\'{a}}k, and Rico Zenklusen.
\newblock Submodular function maximization via the multilinear relaxation and
  contention resolution schemes.
\newblock {\em {SIAM} J. Comput.}, 43(6):1831--1879, 2014.

\bibitem[DHK{\etalchar{+}}13]{ec/Devanur0KMY13}
Nikhil~R. Devanur, Zhiyi Huang, Nitish Korula, Vahab~S. Mirrokni, and Qiqi Yan.
\newblock Whole-page optimization and submodular welfare maximization with
  online bidders.
\newblock In {\em {EC}}, pages 305--322. {ACM}, 2013.

\bibitem[DJK13]{soda/DevanurJK13}
Nikhil~R. Devanur, Kamal Jain, and Robert~D. Kleinberg.
\newblock Randomized primal-dual analysis of ranking for online bipartite
  matching.
\newblock In {\em SODA}, pages 101--107, 2013.

\bibitem[DJSW11]{ec/DevanurJSW11}
Nikhil~R. Devanur, Kamal Jain, Balasubramanian Sivan, and Christopher~A.
  Wilkens.
\newblock Near optimal online algorithms and fast approximation algorithms for
  resource allocation problems.
\newblock In {\em {EC}}, pages 29--38. {ACM}, 2011.

\bibitem[ELSW13]{stacs/EpsteinLSW13}
Leah Epstein, Asaf Levin, Danny Segev, and Oren Weimann.
\newblock Improved bounds for online preemptive matching.
\newblock In {\em {STACS}}, volume~20 of {\em LIPIcs}, pages 389--399. Schloss
  Dagstuhl - Leibniz-Zentrum fuer Informatik, 2013.

\bibitem[FHK{\etalchar{+}}10]{esa/FeldmanHKMS10}
Jon Feldman, Monika Henzinger, Nitish Korula, Vahab~S. Mirrokni, and Clifford
  Stein.
\newblock Online stochastic packing applied to display ad allocation.
\newblock In {\em {ESA} {(1)}}, volume 6346 of {\em Lecture Notes in Computer
  Science}, pages 182--194. Springer, 2010.

\bibitem[FKM{\etalchar{+}}09]{wine/FeldmanKMMP09}
Jon Feldman, Nitish Korula, Vahab~S. Mirrokni, S.~Muthukrishnan, and Martin
  P{\'{a}}l.
\newblock Online ad assignment with free disposal.
\newblock In {\em {WINE}}, volume 5929 of {\em Lecture Notes in Computer
  Science}, pages 374--385. Springer, 2009.

\bibitem[Fun14]{scheduling/Fung14}
Stanley P.~Y. Fung.
\newblock Online scheduling with preemption or non-completion penalties.
\newblock {\em J. Scheduling}, 17(2):173--183, 2014.

\bibitem[GGJY76]{jct/GareyGJ76}
M.~R. Garey, Ronald~L. Graham, David~S. Johnson, and Andrew~Chi{-}Chih Yao.
\newblock Resource constrained scheduling as generalized bin packing.
\newblock {\em J. Comb. Theory, Ser. {A}}, 21(3):257--298, 1976.

\bibitem[HKM14]{algorithmica/HanKM14}
Xin Han, Yasushi Kawase, and Kazuhisa Makino.
\newblock Online unweighted knapsack problem with removal cost.
\newblock {\em Algorithmica}, 70(1):76--91, 2014.

\bibitem[HKM15]{tcs/HanKM15}
Xin Han, Yasushi Kawase, and Kazuhisa Makino.
\newblock Randomized algorithms for online knapsack problems.
\newblock {\em Theor. Comput. Sci.}, 562:395--405, 2015.

\bibitem[HSS06]{cc/HazanSS06}
Elad Hazan, Shmuel Safra, and Oded Schwartz.
\newblock On the complexity of approximating \emph{k}-set packing.
\newblock {\em Computational Complexity}, 15(1):20--39, 2006.

\bibitem[IZ07]{approx/IwamaZ07}
Kazuo Iwama and Guochuan Zhang.
\newblock Optimal resource augmentations for online knapsack.
\newblock In {\em {APPROX-RANDOM}}, volume 4627 of {\em Lecture Notes in
  Computer Science}, pages 180--188. Springer, 2007.

\bibitem[KRTV14]{stoc/KesselheimTRV14}
Thomas Kesselheim, Klaus Radke, Andreas T{\"{o}}nnis, and Berthold
  V{\"{o}}cking.
\newblock Primal beats dual on online packing lps in the random-order model.
\newblock In {\em {STOC}}, pages 303--312. {ACM}, 2014.

\bibitem[MR12]{icalp/MolinaroR12}
Marco Molinaro and R.~Ravi.
\newblock Geometry of online packing linear programs.
\newblock In {\em {ICALP} {(1)}}, volume 7391 of {\em Lecture Notes in Computer
  Science}, pages 701--713. Springer, 2012.

\bibitem[OFS11]{walcom/OualiFS11}
Mourad~El Ouali, Antje Fretwurst, and Anand Srivastav.
\newblock Inapproximability of \emph{b}-matching in \emph{k}-uniform
  hypergraphs.
\newblock In {\em {WALCOM}}, volume 6552 of {\em Lecture Notes in Computer
  Science}, pages 57--69. Springer, 2011.

\bibitem[Yao77]{focs/Yao77}
Andrew~Chi{-}Chih Yao.
\newblock Probabilistic computations: Toward a unified measure of complexity
  (extended abstract).
\newblock In {\em {FOCS}}, pages 222--227. {IEEE} Computer Society, 1977.

\end{thebibliography}
\bibliographystyle{alpha}
}

\end{document}